\theoremstyle{plain}
\newtheorem{thm}{Theorem}
\newtheorem{lem}[thm]{Lemma}
\newtheorem{prop}[thm]{Proposition}
\newtheorem{rem}{Remark}
\newtheorem{sty1}{Theorem}
\newtheorem{defi}[sty1]{Definition}
\def\BibTeX{{\rm B\kern-.05em{\sc i\kern-.025em b}\kern-.08em
    T\kern-.1667em\lower.7ex\hbox{E}\kern-.125emX}}
\begin{document}
\title{Agile Affine Frequency Division Multiplexing}

\author{Yewen Cao and Yulin Shao
\thanks{The authors are with the Department of Electrical and Electronic Engineering, The University of Hong Kong, Hong Kong S.A.R..}
\thanks{Correspondence: \url{ylshao@hku.hk}.}
}


\maketitle

\begin{abstract}
The advancement to 6G calls for waveforms that transcend static robustness to achieve intelligent adaptability. Affine Frequency Division Multiplexing (AFDM), despite its strength in doubly-dispersive channels, has been confined by chirp parameters optimized for worst-case scenarios. This paper shatters this limitation with Agile-AFDM, a novel framework that endows AFDM with dynamic, data-aware intelligence. By redefining chirp parameters as optimizable variables for each transmission block based on real-time channel and data information, Agile-AFDM transforms into an adaptive platform. It can actively reconfigure its waveform to minimize peak-to-average power ratio (PAPR) for power efficiency, suppress inter-carrier interference (ICI) for communication reliability, or reduce Cram\'er-Rao bound (CRLB) for sensing accuracy. This paradigm shift from a static, one-size-fits-all waveform to a context-aware signal designer is made practical by efficient, tailored optimization algorithms. Comprehensive simulations demonstrate that this capability delivers significant performance gains across all metrics, surpassing conventional OFDM and static AFDM. Agile-AFDM, therefore, offers a crucial step forward in the design of agile waveforms for 6G and beyond.
\end{abstract}

\begin{IEEEkeywords}
AFDM, waveform, PAPR, ICI, CRLB, ISAC.
\end{IEEEkeywords}

\section{Introduction}\label{Sec:Introduction}
\subsection{Background}
The evolution of wireless communication has been fundamentally shaped by its core waveform, driven relentlessly by demands for higher data rates and lower latency \cite{lien20175g,dang2020should,liu2022integrated,shao2024theory}. In 4G and 5G systems, Orthogonal Frequency Division Multiplexing (OFDM) has served this role effectively \cite{lien20175g,cho2010mimo}, prized for its robustness against frequency-selective fading and implementation efficiency. However, the emerging vision for 6G, encompassing high-speed mobility \cite{6Gvision}, unmanned aerial vehicles \cite{UAV}, and Integrated Sensing and Communication (ISAC) \cite{ISACsurvey}, exposes a critical vulnerability in OFDM: acute susceptibility to Doppler spread. In high-mobility environments, Doppler effects disrupt subcarrier orthogonality, leading to severe Inter-Carrier Interference (ICI) and significant performance degradation \cite{cho2010mimo,shao2021federated}.

This limitation has catalyzed the search for waveforms inherently robust to time-varying channels \cite{hlawatsch2011wireless,OTFS,bemani_affine_2023_twc}. Among promising candidates, Affine Frequency Division Multiplexing (AFDM) \cite{bemani_affine_2023_twc} has emerged as a powerful and generalized multi-carrier framework. Its core innovation lies in the Discrete Affine Fourier Transform (DAFT), which generalizes the classical discrete Fourier transform (DFT) by incorporating two chirp parameters, $c_1$ and $c_2$. These parameters provide unparalleled flexibility: $c_1$ dictates the linear chirp rate in the time domain, acting as a pre-distortion to combat multipath delay spread, while $c_2$ applies a quadratic phase pre-coding in the Discrete Affine Fourier (DAF) domain, ensuring linear independence among path responses \cite{rou2025affine}.
By pre-setting these parameters according to the statistical worst-case channel bounds (e.g., maximum delay and Doppler), AFDM can orthogonalize doubly-dispersive channels \cite{bemani_affine_2023_twc}, achieving full diversity order and superior resilience in high-mobility scenarios.

\subsection{Chirp Parameter Configuration}
The chirp parameters $c_1$ and $c_2$ are the defining degrees of freedom in AFDM, sculpting its time-frequency structure to combat double dispersion. Existing configuration strategies can be categorized by their temporal adaptability and optimization objectives.

The foundational and most prevalent approach employs \textit{static} parameters, determined during system design and fixed throughout operation. This paradigm is inherently conservative: parameters are calculated based on statistical worst-case channel characteristics, such as the maximum expected delay and Doppler spreads, to guarantee baseline performance under the most adverse conditions \cite{bemani_affine_2023_twc,zhou_affine_2025_mvt,bemani_integrated_2024_lwc,rou_from_2024_msp,ranasinghe_joint_2025_twc}. The primary objective is to ensure fundamental properties like path separability and the achievement of full diversity order. Following this principle, $c_1$ is typically determined by the maximum tolerable Doppler shift, while $c_2$ is set as an irrational number or a sufficiently small rational to ensure linear independence among channel paths in the DAF domain.

Within this static framework, subsequent research has sought to optimize the fixed $(c_1, c_2)$ pair for more specific objectives. For instance, \cite{li_chirp_2025_tcom} tunes $c_1$ to minimize the bit error rate (BER) under minimum mean square error (MMSE) equalization by strategically separating dominant off-diagonal components of the effective channel matrix; \cite{benzine_affine_2024_spawc} adjusts $c_1$ based on the hierarchical sparsity levels of the delay-Doppler profile to satisfy the Restricted Isometry Property (RIP) for compressed sensing.

In the context of ISAC, \cite{yin_ambiguity_2025_jsac} demonstrates that $c_1$ shapes the unambiguity parallelogram of the ambiguity function, which has to match sensing channel characteristics for interference-free parameter estimation. Ref. \cite{ni_an_2025_twc} establishes analytical relationships between $c_1$ and sensing metrics, enabling a flexible trade-off between the maximum tolerable delay and Doppler shifts.
While these works optimize for a particular metric, the resulting parameters remain fixed and cannot adapt to the instantaneous channel state or the data being transmitted.

A separate line of research introduces variability in $(c_1, c_2)$, but primarily to enable auxiliary functionalities rather than to dynamically optimize core communication/sensing performance. This includes using parameter variation as an additional resource for data transmission or system security. For example, Index Modulation (IM) schemes exploits $c_2$ as a data-carrying resource \cite{liu_pre_2025_twc}. By selecting $c_2$ from a predefined alphabet, pre-chirp-domain index modulation enhances spectral efficiency through implicit signaling while maintaining full diversity in doubly dispersive channels.
Further, \cite{chen_chirp_2025_icc} exploits dynamic chirp parameter configuration as an encryption mechanism for physical layer security. By implementing a key-driven pseudo-random hopping pattern for $(c_1, c_2)$, the signal constellation is randomized for eavesdroppers, thereby preventing information leakage without complex upper-layer encryption. A particularly relevant and parallel development is the dynamic adjustment of $c_2$ \cite{cao2025fractional,yuan_papr_2025_lwc} to combat high Peak-to-Average Power Ratio (PAPR). In \cite{yuan_papr_2025_lwc}, the authors propose a grouped pre-chirp selection (GPS) approach to explore a discrete set of candidate $c_2$ values to identify the one yielding the lowest instantaneous PAPR for transmission.

\subsection{Contributions}
The prevailing paradigm for AFDM, i.e., optimizing chirp parameters offline for worst-case channels, is fundamentally static, failing to exploit the waveform's inherent agility. This underutilization limits AFDM's potential as a truly adaptive system for intelligent and efficient 6G.

To bridge this gap and fully unlock the potential of AFDM, we introduce Agile-AFDM, a novel framework that transforms AFDM from a static waveform into a dynamic, intelligent, and data-aware platform. Our core innovation can be succinctly captured as follows:

\textit{Agile-AFDM is a waveform adaptation technique that performs dynamic, per-block optimization of the AFDM chirp parameters, driven by real-time Channel State Information (CSI), the specific data symbols being transmitted, and targeted performance objectives. It serves as a universal plug-and-play module, empowering AFDM systems to excel across diverse metrics such as PAPR (for enhanced power efficiency), ICI (for improved communication reliability), and Cram\'er-Rao Lower Bound (CRLB, for superior sensing accuracy).}

This adaptive philosophy finds a direct parallel in the evolution of the Cyclic Prefix (CP) in OFDM systems: from a static length designed for the worst-case delay spread (as in 4G LTE \cite{dahlman20134g}), to flexible configurations that adapt to specific scenarios and efficiency demands (as in 5G New Radio \cite{lien20175g}). Similarly, Agile-AFDM shifts the chirp parameters from static, worst-case guarantees to dynamic, context-aware optima, unlocking significant gains in efficiency, reliability, and sensing precision for each transmission block.

This work makes three key contributions that translate this vision into a concrete and effective framework:
\begin{itemize}[leftmargin=0.5cm]
    \item We put forth a paradigm shift in AFDM design by formulating chirp parameter selection as a per-data-block dynamic optimization problem. Rather than relying on universal static values, our approach determines the optimal parameters in real time based on the instantaneous channel state, the actual data symbols, and the targeted performance objective. This transforms AFDM into an agile, reconfigurable waveform platform, enabling on-demand customization for communication and sensing, and fundamentally overcoming the limitations of static configurations.
    \item We provide a comprehensive analytical and algorithmic foundation for agile waveform adaptation. This includes a rigorous analysis of how the chirp parameters $(c_1,c_2)$ govern three core metrics: PAPR, ICI, and CRLB. We reveal that PAPR is primarily governed by $c_2$ and exhibits a periodic structure, whereas ICI and CRLB are jointly determined by both parameters. Building on these insights, we introduce: (i) a low-complexity fine-grained search algorithm for PAPR reduction that exploits parameter periodicity; (ii) a Fractional Programming (FP)-based alternating optimization method for ICI suppression via Signal-to-Interference Ratio (SIR) maximization; and (iii) a Particle Swarm Optimization (PSO) scheme to navigate the non-convex landscape of CRLB minimization for sensing parameter estimation.
    \item Through extensive numerical simulations, we demonstrate that Agile-AFDM delivers significant performance improvements over static configurations across diverse channel conditions.
    Compared to OFDM and static AFDM, Agile-AFDM reduces PAPR by approximately 50\%, outperforming other state-of-the-art techniques. For communication reliability, Agile-AFDM achieves a $7.32$ dB SIR gain over OFDM and a $2.64$ dB gain over static AFDM, with lower variance indicating greater robustness. In sensing, Agile-AFDM lowers the CRLB by an average of $8.3$ dB for delay estimation and $12.7$ dB for Doppler estimation, with particularly pronounced improvements in high-mobility scenarios. These results validate the framework's effectiveness and its transformative potential for next-generation adaptive wireless systems.
\end{itemize}

{\it Notations}: We use $\Re$ and $\Im$ to denote the real and imaginary parts of a complex number, respectively.
The imaginary unit is represented by $j$. 
For a vector or matrix, $(\cdot)^*$ denotes the complex conjugate, and $(\cdot)^H$ denotes the conjugate transpose (Hermitian transpose).
$\mathbb{Z}$ represents the set of integers. 
$\mathbb{R}$ represents the set of real numbers.
$\mathcal{C}$ and $\mathcal{CN}$ stand for the real and complex Gaussian distributions, respectively.
Additionally, $\mathcal{U}(a, b)$ denotes the uniform distribution over the interval $[a, b]$.
$\bm{F}^{-1}$ denotes the inverse of matrix $\bm{F}$. $\mathbb{E}_m[\cdot]$ denotes the expectation with respect to $m$. 
$\oslash$ is the element-wise division (also known as Hadamard division), denoting the operation of dividing corresponding elements of two same-dimensional matrices or vectors pairwise.
$\nabla_{\bm{c}}$ denotes the gradient with respect to vector $\bm{c}$.
The inner product of two vector $\bm{a}$ and $\bm{b}$ is denoted by $\langle \bm{a}, \bm{b} \rangle = \bm{a}^H \bm{b}$.

\section{System Model}\label{Sec:System Model}
\begin{figure*}[t]
  \centering
  \includegraphics[width=0.8\textwidth]{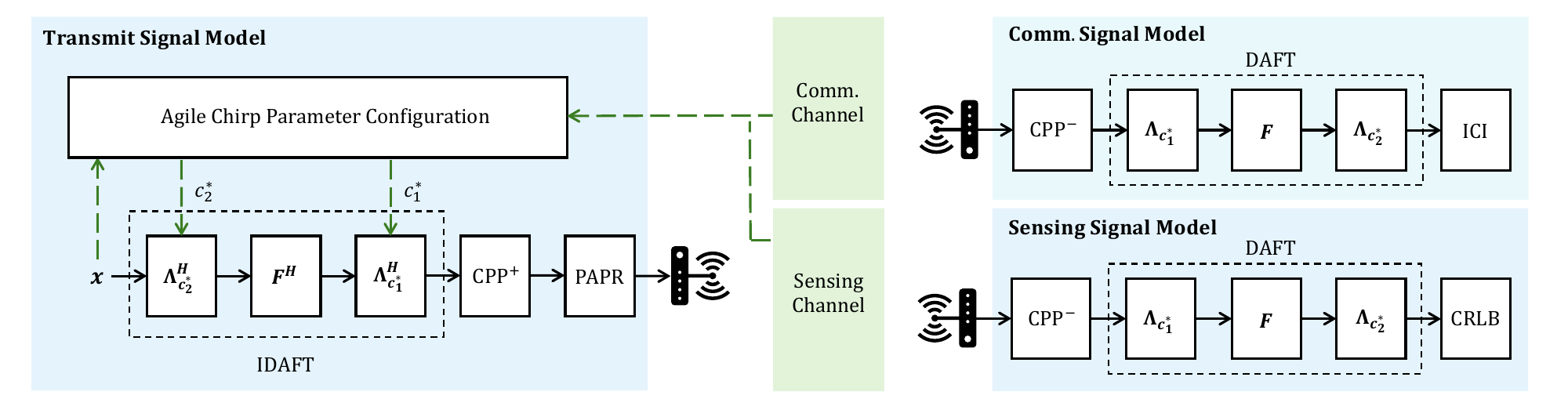}
  \caption{Architecture of the Agile-AFDM transceiver, showcasing the signal processing flow and its adaptive configuration for three objectives: minimizing PAPR, mitigating ICI, and lowering CRLB for sensing.}
  \label{Fig:framework}
\end{figure*}

The core tenet of Agile-AFDM is to transcend the static configuration paradigm by dynamically optimizing the chirp parameters, $c_1$ and $c_2$, for each transmission block. This optimization is conditioned upon the instantaneous CSI, the data symbols to be transmitted $\bm{x}$, and a specific, real-time performance objective. To demonstrate its versatility, this work focuses on three critical objectives: minimizing PAPR for power-efficient transmission, mitigating ICI for reliable communications, and lowering the CRLB for accurate sensing. The following subsections present the transceiver structure of Agile-AFDM, as illustrated in Fig. \ref{Fig:framework}, and derive these three metrics, highlighting the pivotal role of the chirp parameters in shaping each one.

\subsection{Transmit Signal Model and PAPR}
We begin at the transmitter, where the interplay between the data symbols and the chirp parameters dictates the waveform's structure. Let $\bm{x} = [x[0], x[1], \cdots, x[N-1]]^\top$ denote a block of data symbols in the DAF domain, where each element $x[m]$ is drawn from a Quadrature Amplitude Modulation (QAM) constellation or a complex Gaussian distribution.

The transmitter maps $\bm{x}$ to the time-domain signal $\bm{s}$ via the inverse DAFT (IDAFT), parameterized by $(c_1, c_2)$:
\begin{equation}
\bm{s} = \bm{A}^H \bm{x} = \bm{\Lambda}_{c_1}^H \bm{F}^H \bm{\Lambda}_{c_2}^H \bm{x},
\end{equation}
where $\bm{F}$ is the DFT matrix, and $\bm{\Lambda}_{c_1}$, $\bm{\Lambda}_{c_2}$ are diagonal matrices with elements $[\bm{\Lambda}_{c_1}]_{n,n} = e^{-j 2\pi c_1 n^2}$ and $[\bm{\Lambda}_{c_2}]_{m,m} = e^{-j 2\pi c_2 m^2}$, respectively. 

The element-wise representation of the time-domain modulated signal is given by
\begin{equation} \label{Eq:s_n}
s[n] = \sum_{m=0}^{N-1} x[m] \phi_n(m), \quad \text{for } n = 0, 1, \ldots, N-1,
\end{equation}
where the basis function is defined as
\begin{equation}
\phi_n(m) \triangleq \frac{1}{\sqrt{N}} e^{j 2\pi \left(c_1 n^2 + c_2 m^2 + \frac{n m}{N}\right)}.
\end{equation}

To combat multipath propagation, a chirp-periodic prefix (CPP) is appended, and we define
\begin{equation*}
s[n] \triangleq s[N+n] e^{-j 2\pi c_1 (N^2 + 2Nn)},~\text{for } n = -L_{cp}, \dots, -1.
\end{equation*}
Note that when $2Nc_1$ is an integer and $N$ is even, CPP becomes equivalent to the standard CP used in OFDM.

A critical performance metric, especially for uplink transmission where user equipment is power-limited, is the PAPR. It quantifies the envelope fluctuations of the transmitted signal and is defined for the continuous-time baseband signal $s(t)$, as the power amplifier operates in the analog domain and the peak power may occur between sampling instants. The continuous signal $s(t)$ can be viewed as an interpolated version of the discrete samples, where the discrete sample $s[n]$ corresponds to the value of $s(t)$ at the sampling instant $t = n T_s$, i.e., $s[n] = s(n T_s)$.

\begin{defi}
The PAPR of the AFDM system is defined as
\begin{equation}\label{Eq:PAPR_def}
    \xi \triangleq \frac{\max_{0 \le t \le T_{\text{sym}}}\big\{|s(t)|^2\big\}}{\mathbb{E}\big[|s(t)|^2\big]},~~~
    \xi_{\text{dB}} \triangleq 10\log\xi,
\end{equation}
where $T_{\text{sym}} \triangleq N T_{s}$ is the block duration, and the continuous-time baseband signal $s(t)$ is given by
\begin{equation}\label{Eq:s(t)}
\begin{aligned}
s(t) = \sqrt{\frac{1}{N}}\sum\limits_{m=0}^{N-1} x[m] e^{j \frac{2 \pi}{T_s^2} c_{1} t^2}  e^{j 2 \pi c_{2} m^2} e^{j2\pi m t/T_{\text{sym}}}.
\end{aligned}
\end{equation}
\end{defi}

\begin{rem}
   Equation \eqref{Eq:s(t)} provides the analytical expression for $s(t)$, which maps the discrete symbols $x[m]$ to a continuous-time waveform through the inverse unitary transform. By letting $t = n T_s$ in \eqref{Eq:s(t)}, the discrete version of the AFT (as in \eqref{Eq:s_n}) can be directly derived, thereby establishing the consistency between discrete and continuous signals. 
\end{rem}

In Section \ref{Sec:PAPR}, we will analyze the relationship between the chirp parameters and the PAPR, revealing that it is primarily governed by $c_2$ and exhibits a periodic structure. This enables efficient, per-block optimization of $c_2$ to minimize PAPR, achieving adaptive waveform tailoring for power efficiency.

\subsection{Communication Signal Model and ICI}
We now turn to the communication receiver to examine how the chirp parameters affect communication reliability in doubly-dispersive channels. Consider the transmitted signal $\bm{s}$ propagating through a time-varying multipath channel to a receiver. The channel impulse response is
\begin{equation*}
g^{\text{comm}}_n(\ell) = \sum_{i=1}^{P} h_i e^{-j 2\pi f_i n} \delta(\ell - \ell_i),
\end{equation*}
where $P$ is the number of paths, each characterized by a complex gain $h_i$, a Doppler shift $f_i$ (normalized digital frequency), and a delay $\ell_i$ (in samples).

After CPP removal, the received signal can be written as
\begin{equation*}
\bm{r}^{\text{comm}} = \bm{H}^{\text{comm}} \bm{s} + \bm{w}^{\text{comm}},
\end{equation*}
where $\bm{w}^{\text{comm}} \sim \mathcal{CN}(\bm{0}, N_0 \bm{I})$ is complex additive white Gaussian noise (AWGN), and $\bm{H}^{\text{comm}}$ is the channel convolution matrix. This matrix can be decomposed as
\begin{equation}\label{eq:commH}
\bm{H}^{\text{comm}} = \sum_{i=1}^{P} h_i \boldsymbol{\Gamma}_{\text{CPP},i} \boldsymbol{\Delta}_{f_i} \boldsymbol{\Pi}^{l_i},
\end{equation}
where $\boldsymbol{\Delta}_{f} \triangleq \operatorname{diag}([e^{-j 2\pi f n}]_{n=0}^{N-1})$ is the Doppler shift matrix; $\boldsymbol{\Pi}$ is the forward cyclic-shift matrix with $[\boldsymbol{\Pi}]_{i,j} \triangleq \delta((i-j+1)\mod N)$; $\boldsymbol{\Gamma}_{\text{CPP},i}$ is the CPP compensation matrix for the $i$-th path.

Applying the DAFT at the receiver converts the signal back to the DAF domain, yielding
\begin{equation}\label{eq:commy}
\bm{y}^{\text{comm}} = \bm{A} \bm{r}^{\text{comm}} = \bm{H}_{\text{eff}}^{\text{comm}} \bm{x} + \tilde{\bm{w}}^{\text{comm}},
\end{equation}
where $\tilde{\bm{w}}^{\text{comm}} = \bm{A} \bm{w}^{\text{comm}} \sim \mathcal{CN}(\bm{0}, N_0\bm{I})$ is the DAF-domain noise, and $\bm{H}_{\text{eff}}^{\text{comm}} \triangleq \bm{A} \bm{H}^{\text{comm}} \bm{A}^H$ is the effective end-to-end channel matrix in the DAF domain. Its elements are given by
\begin{equation}\label{eq:commHelement}
H^{\text{comm}}_{\text{eff}}[p,q] = \frac{1}{N} \sum_{i=1}^{P} h_i e^{j \frac{2\pi}{N} \left(N c_1 \ell_i^2 - q \ell_i + N c_2 (q^2 - p^2)\right)} \mathcal{F}_i(p,q),
\end{equation}
wherein we define
\begin{equation}\label{eq:commF}
\mathcal{F}_i(p, q) \triangleq \sum_{n=0}^{N-1} e^{-j \frac{2\pi}{N} \psi_i n} =
\begin{cases} 
N, & \text{if } \dfrac{\psi_i}{N} \in \mathbb{Z}, \\[10pt]
\dfrac{ e^{-j 2\pi \psi_i} - 1 }{ e^{-j \frac{2\pi}{N} \psi_i} - 1 }, & \text{otherwise,}
\end{cases}
\end{equation}
with $\psi_i \triangleq p - q + \nu_i + 2 N c_1 \ell_i$ and $\nu_i \triangleq N f_i$ being the normalized Doppler shift. The condition $\dfrac{\psi_i}{N} \in \mathbb{Z}$ corresponds to an alignment where the phase rotation per sample is a multiple of $2\pi$, leading to coherent superposition and a peak value of $N$. Otherwise, the function exhibits a sinc-like behavior, causing the interference to be distributed.

This structure directly determines the ICI. The input-output relationship in \eqref{eq:commy} can be expanded for the $p$-th received subcarrier as
\begin{equation}\label{eq:commy_subcarrier}
y^{\text{comm}}[p] = \underbrace{H_{\text{eff}}^{\text{comm}}[p,p] x[p]}_{\text{Desired signal}} + \underbrace{\sum_{\substack{q=0 \\ q \neq p}}^{N-1} H_{\text{eff}}^{\text{comm}}[p,q] x[q]}_{\text{ICI}} + \tilde{w}^{\text{comm}}[p]
\end{equation}

In high-mobility scenarios with large Doppler spreads, the ICI term becomes significant and severely degrades the decoding performance. The core advantage of Agile-AFDM lies in its ability to mitigate this issue. By dynamically optimizing the chirp parameters $c_1$ and $c_2$ based on the instantaneous CSI and the data vector $\bm{x}$, we can reshape the effective channel matrix $\bm{H}_{\text{eff}}^{\text{comm}}$. The goal is to suppress the off-diagonal elements that constitute the ICI, thereby enhancing the signal-to-interference ratio (SIR). A detailed analysis of the SIR and the corresponding optimization algorithm is presented in Section \ref{Sec:SIR}.

\subsection{Sensing Signal Model and CRLB}
Beyond reliable communications, the convergence of wireless systems now demands high-accuracy sensing capabilities, a cornerstone of ISAC. Agile-AFDM natively supports this dual functionality. We now consider a monostatic sensing scenario, where the transmitter illuminates a target and processes the echo of its own transmitted signal $\bm{s}$ to estimate target parameters.

To maintain structural consistency with the communication model, we formulate the sensing model in an analogous manner. The sensing channel impulse response from the transmitter to a target and back is characterized by
\begin{equation}
g^{\text{sens}}_{n}(\ell) = \alpha e^{-j 2\pi f n} \delta(\ell - \ell),
\end{equation}
where $\alpha$ is the complex reflection coefficient, $f$ is the Doppler shift (normalized digital frequency), and $l$ is the two-way delay (in samples).

The received echo at the transmitter, after CPP removal, is given by
\begin{equation}
\bm{r}^{\text{sens}} = \bm{H}^{\text{sens}} \bm{s} + \bm{w}^{\text{sens}},
\end{equation}
where $\bm{w}^{\text{sens}} \sim \mathcal{CN}(\bm{0}, N_0 \bm{I})$ is AWGN, and the sensing channel matrix mirrors its communication counterpart:
\begin{equation*}
\bm{H}^{\text{sens}} = \alpha \boldsymbol{\Gamma}_{\text{CPP}} \boldsymbol{\Delta}_{f} \boldsymbol{\Pi}^{l}.
\end{equation*}

Transforming the echo signal to the DAF domain yields the input-output relationship
\begin{equation}\label{eq:sensingy}
\bm{y}^{\text{sens}} = \bm{A} \bm{r}^{\text{sens}} = \bm{H}_{\text{eff}}^{\text{sens}} \bm{x} + \tilde{\bm{w}}^{\text{sens}},
\end{equation}
where $\bm{H}_{\text{eff}}^{\text{sens}} \triangleq \bm{A} \bm{H}^{\text{sens}} \bm{A}^H$ is the effective sensing channel matrix, and $\tilde{\bm{w}}^{\text{sens}} = \bm{A} \bm{w}^{\text{sens}} \sim \mathcal{CN}(\bm{0}, N_0\bm{I})$.
Similar to \eqref{eq:commHelement}, the elements of $\bm{H}_{\text{eff}}^{\text{sens}}$ can be written as
\begin{equation*}
H^{\text{sens}}_{\text{eff}}[p,q] = \frac{1}{N} \alpha e^{j \frac{2\pi}{N} \left(N c_1 \ell^2 - q \ell + N c_2 (q^2 - p^2)\right)} \mathcal{F}(p,q),
\end{equation*}
where $\mathcal{F}$ is as defined in \eqref{eq:commF}.

The primary goal in sensing is to estimate the target's delay $\ell$ and Doppler shift $\nu=Nf$ from the received signal $\bm{y}^{\text{sens}}$.
The CRLB provides a fundamental lower bound on the variance of any unbiased estimator of these parameters. 
As will be derived in Section \ref{Sec:CRLB}, the CRLB for both delay and Doppler estimation are intricate functions of the chirp parameters $c_1$ and $c_2$.
This inherent dependency enables Agile-AFDM to dynamically optimize these parameters, thereby minimizing the CRLB and achieving superior sensing accuracy compared to OFDM and static AFDM.

A key property that facilitates this parameter optimization is the periodicity of the effective channel matrices, which we formalize as follows.

\begin{prop}\label{Prop:perio}
The effective communication channel matrix $\bm{H}_{\text{eff}}^{\text{comm}}$ and the effective sensing channel matrix $\bm{H}_{\text{eff}}^{\text{sens}}$ are periodic functions of the chirp parameters $c_1$ and $c_2$ with period $1$. Specifically, for any integers $k, m \in \mathbb{Z}$, the following equalities hold:
\begin{align}
\bm{H}_{\text{eff}}^{\text{comm}}(c_1 + k, c_2 + m) &= \bm{H}_{\text{eff}}^{\text{comm}}(c_1, c_2), \\
\bm{H}_{\text{eff}}^{\text{sens}}(c_1 + k, c_2 + m) &= \bm{H}_{\text{eff}}^{\text{sens}}(c_1, c_2).
\end{align}
\end{prop}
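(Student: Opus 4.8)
The plan is to prove the claim entrywise, working directly from the closed-form expression \eqref{eq:commHelement} for $H^{\text{comm}}_{\text{eff}}[p,q]$ (and its sensing analogue), since that expression already exposes the full dependence on $(c_1,c_2)$ — the chirp-periodic-prefix compensation has been absorbed into the cyclic sum $\mathcal{F}_i(p,q)$. I will show that each entry is invariant under $c_1\mapsto c_1+k$ and $c_2\mapsto c_2+m$ for every $k,m\in\mathbb{Z}$. The only structural fact the argument relies on is that $p$, $q$, $N$, and each path delay $\ell_i$ are integers (subcarrier indices, block length, and delays in samples), so that $e^{j2\pi z}=1$ whenever $z$ is a product of these with $k$ or $m$.

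For the $c_2$ direction, I will note that in \eqref{eq:commHelement} the parameter $c_2$ appears only through the factor $e^{j2\pi c_2(q^2-p^2)}$ and nowhere inside $\psi_i$ or $\mathcal{F}_i(p,q)$. Replacing $c_2$ by $c_2+m$ multiplies this factor by $e^{j2\pi m(q^2-p^2)}=1$, so every entry is unchanged; this half is immediate.

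The $c_1$ direction requires a little more bookkeeping, because $c_1$ enters in two places: the phase factor $e^{j2\pi c_1\ell_i^2}$, and the quantity $\psi_i=p-q+\nu_i+2Nc_1\ell_i$ that feeds $\mathcal{F}_i$ via \eqref{eq:commF}. Under $c_1\mapsto c_1+k$ the phase factor picks up $e^{j2\pi k\ell_i^2}=1$, while $\psi_i$ shifts by the multiple $2Nk\ell_i$ of $2N$. The key step — and the only place where verification is actually needed — is to check that $\mathcal{F}_i(p,q)$ is insensitive to such a shift of $\psi_i$: in the summation form each term satisfies $e^{-j\frac{2\pi}{N}(\psi_i+2Nk\ell_i)n}=e^{-j\frac{2\pi}{N}\psi_i n}$ since $2k\ell_i n\in\mathbb{Z}$; in the closed form, both $e^{-j2\pi\psi_i}$ and $e^{-j\frac{2\pi}{N}\psi_i}$ are unchanged because $2Nk\ell_i\in\mathbb{Z}$ and $2k\ell_i\in\mathbb{Z}$ respectively; and the branch condition $\psi_i/N\in\mathbb{Z}$ is preserved since it only shifts by $2k\ell_i\in\mathbb{Z}$. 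Hence $\mathcal{F}_i(p,q)$ and the whole entry are unchanged, giving period $1$ in $c_1$.

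Finally I will observe that the sensing entry $H^{\text{sens}}_{\text{eff}}[p,q]$ has exactly the same functional form with the single reflection term $(\alpha,\ell,\nu)$ in place of the sum over $(h_i,\ell_i,\nu_i)$, so the identical computation applies verbatim (equivalently, set $P=1$). I do not anticipate any real obstacle: the only care needed is to treat both branches of the piecewise definition of $\mathcal{F}_i$ and to be explicit that all indices and delays are integer-valued, which is precisely what makes the stray phase factors collapse to unity.
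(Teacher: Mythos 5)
Your proposal is correct and follows essentially the same route as the paper's Appendix~\ref{sec:AppA}: entrywise invariance, with the $c_2$ shift absorbed by $e^{j2\pi m(q^2-p^2)}=1$ and the $c_1$ shift handled by noting both $e^{j2\pi k\ell^2}=1$ and that $\psi$ moves by $2Nk\ell$, which leaves each term $e^{-j\frac{2\pi}{N}\psi n}$ (and hence $\mathcal{F}$) unchanged. Your extra check of the closed-form branch and the branch condition $\psi/N\in\mathbb{Z}$ is a small refinement of the same argument, not a different approach.
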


\begin{proof}
    See Appendix \ref{sec:AppA}.
\end{proof}

This periodicity property confines the optimal parameter search to the compact space $(c_1,c_2)\in[0,1)\times[0,1)$, significantly reducing the computational complexity of the dynamic optimization process.

With the signal model established and the three core metrics defined, we now proceed to the analysis of the Agile-AFDM framework. The following three sections will systematically analyze the unique characteristics of each metric and present our specialized approaches for dynamic chirp parameter adaptation.

\section{Agile Waveforming for Power Efficiency}\label{Sec:PAPR}

A critical drawback of conventional multi-carrier waveforms is their high PAPR \cite{jiang2005exponential}, which severely degrades power efficiency by forcing power amplifiers to operate in inefficient linear regions to avoid signal distortion. This challenge is particularly acute in uplink scenarios, where user equipment faces stringent power constraints. In this section, we demonstrate that Agile-AFDM addresses this fundamental limitation through adaptive chirp parameter optimization, offering a unique and effective PAPR reduction capability that transcends conventional waveform designs.

\subsection{PAPR Analysis}
The PAPR of Agile-AFDM, denoted as $\xi$, is defined in \eqref{Eq:PAPR_def}. To characterize $\xi$, we first analyze the envelope power function of $s(t)$, i.e., $|s(t)|^2$. Through detailed expansion of the continuous-time signal representation in \eqref{Eq:s(t)}, we obtain
\begin{eqnarray}\label{eq:AFDM_PAPR_expansion}
&&\hspace{-0.4cm} |s(t)|^2 = s(t)\cdot (s(t))^* = \frac{1}{N} \sum_{m=0}^{N-1} x[m]x^*[m] + \notag\\
&&\hspace{-0.4cm} \frac{2}{N} \left\{ \sum_{p=1}^{N-1} \gamma_{p}^{(1)}(c_{2})\cos(\frac{2\pi pt}{T}) + \sum_{p=1}^{N-1} \gamma_{p}^{(2)}(c_{2})\cos(\frac{2\pi pt}{T}) \right. \notag\\
&&\hspace{-0.4cm} \quad \left. + \sum_{p=1}^{N-1} \gamma_{p}^{(3)}(c_{2})\sin(\frac{2\pi pt}{T}) + \sum_{p=1}^{N-1} \gamma_{p}^{(4)}(c_{2})\sin(\frac{2\pi pt}{T}) \right\} \notag\\
&&\hspace{-0.4cm} \triangleq \frac{1}{N} \sum_{m=0}^{N-1} x[m]x^*[m] + \frac{2}{N} \{g(t)\}, 
\end{eqnarray}
wherein we have defined the constituent components as
\begin{equation*}
\gamma_{p}^{(1)}(c_{2}) \triangleq 
\sum\limits_{m=0}^{N-1-p}\lambda_{m,p}\cos\beta_{m,p}(c_{2}),
\end{equation*}
\begin{equation*}
\gamma_{p}^{(2)}(c_{2})
\triangleq -\sum\limits_{m=0}^{N-1-p}\mu_{m,p}\sin\beta_{m,p}(c_{2}),
\end{equation*}
\begin{equation*}
\gamma_{p}^{(3)}(c_{2})\triangleq -\sum\limits_{m=0}^{N-1-p}\lambda_{m,p}\sin\beta_{m,p}(c_{2}),
\end{equation*}
\begin{equation*}
\gamma_{p}^{(4)}(c_{2}) \triangleq -\sum\limits_{m=0}^{N-1-p}\mu_{m,p}\cos\beta_{m,p}(c_{2}),
\end{equation*}
with the fundamental phase term and data-dependent coefficients given by
\begin{equation*}
\beta_{m,p}(c_{2}) \triangleq 2\pi c_{2} p(2m+p),
\end{equation*}
\begin{equation*}
\lambda_{m,p} \triangleq \Re\{x[m+p]x^*[m]\},~ \mu_{m,p} \triangleq  \Im\{x[m+p]x^{*}[m]\}.
\end{equation*}

A crucial insight from \eqref{eq:AFDM_PAPR_expansion} is the independence of PAPR from parameter $c_{1}$. The chirp parameter $c_{2}$ exclusively governs PAPR through the phase terms $\beta_{m,p}(c_{2}) = 2\pi c_{2} p(2m+p)$, enabling us to express PAPR as a univariate function $\xi(c_{2})$. This parametric simplification is further enhanced by the periodicity of PAPR with respect to $c_{2}$, which significantly streamlines the optimization framework. Building upon this analytical foundation, we now establish two fundamental properties of the Agile-AFDM PAPR characteristic.

\begin{lem}\label{lem:AFDM_PAPR}
The PAPR of the Agile-AFDM system can be expressed as
\begin{equation*}\label{eq:AFDM_PAPR_formula}
\begin{aligned}
\xi_{\text{dB}} = 10\log\left(1 + \frac{2\max\{g(t)\}}{\sum\limits_{m=0}^{N-1} x[m]x^*[m]}\right),
\end{aligned}
\end{equation*}
where $g(t)$ is defined in \eqref{eq:AFDM_PAPR_expansion}.
\end{lem}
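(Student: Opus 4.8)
The plan is to exploit the structural decomposition already established in \eqref{eq:AFDM_PAPR_expansion}, which writes $|s(t)|^2$ as the sum of a \emph{constant} term $\frac{1}{N}\sum_{m=0}^{N-1}x[m]x^*[m]$ and a purely oscillatory term $\frac{2}{N}g(t)$. The whole argument then reduces to two observations: (i) the time-average of $|s(t)|^2$ over the block equals precisely the constant term, and (ii) the constancy of that term lets us pull it out of the maximum in the numerator of $\xi$.

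First I would evaluate the denominator $\mathbb{E}[|s(t)|^2]$, interpreting the expectation as the time-average power $\frac{1}{T_{\text{sym}}}\int_{0}^{T_{\text{sym}}}|s(t)|^2\,dt$ over one block duration $T_{\text{sym}}=NT_s=T$. From the definitions of $\gamma_{p}^{(1)},\dots,\gamma_{p}^{(4)}$, the function $g(t)$ is a finite real linear combination of the harmonics $\cos(2\pi p t/T)$ and $\sin(2\pi p t/T)$ with $p\in\{1,\dots,N-1\}$, whose coefficients $\gamma_{p}^{(k)}(c_2)$ depend only on $c_2$ and the data, not on $t$. Since each such harmonic completes exactly $p\ge 1$ full periods over $[0,T]$, it integrates to zero, so $\int_{0}^{T}g(t)\,dt=0$ and hence $\mathbb{E}[|s(t)|^2]=\frac{1}{N}\sum_{m=0}^{N-1}x[m]x^*[m]$.

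Next, for the numerator, I would note that the constant term in \eqref{eq:AFDM_PAPR_expansion} does not depend on $t$, so maximizing $|s(t)|^2$ over $t\in[0,T_{\text{sym}}]$ amounts to maximizing the single $t$-dependent piece $g(t)$:
\begin{equation*}
\max_{0\le t\le T_{\text{sym}}}|s(t)|^2 = \frac{1}{N}\sum_{m=0}^{N-1}x[m]x^*[m] + \frac{2}{N}\max_{0\le t\le T_{\text{sym}}}\{g(t)\}.
\end{equation*}
Substituting the numerator and denominator into $\xi=\max|s(t)|^2/\mathbb{E}[|s(t)|^2]$ and cancelling the common factor $\frac{1}{N}$ gives $\xi = 1 + 2\max\{g(t)\}/\sum_{m}x[m]x^*[m]$; applying $10\log(\cdot)$ as in \eqref{Eq:PAPR_def} yields the claimed expression for $\xi_{\text{dB}}$.

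The computation is essentially bookkeeping once \eqref{eq:AFDM_PAPR_expansion} is granted, so there is no deep obstacle; the one point that needs care is the averaging step — specifically, confirming that every frequency appearing in $g(t)$ is a genuine nonzero integer multiple of the fundamental $1/T_{\text{sym}}$ (which holds here because $p$ ranges over $1,\dots,N-1$ and $T=T_{\text{sym}}$), so that the oscillatory part truly averages out and no DC leakage contaminates $\mathbb{E}[|s(t)|^2]$. It is also worth remarking that, since $g(t)$ has zero time-average and is continuous, $\max\{g(t)\}\ge 0$, so the argument of the logarithm is always at least $1$ and $\xi_{\text{dB}}\ge 0$, consistent with the meaning of PAPR.
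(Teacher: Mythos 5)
Your argument is correct and follows essentially the same route as the paper's proof: both establish that the oscillatory part $g(t)$ averages to zero so that $\mathbb{E}[|s(t)|^2]=\frac{1}{N}\sum_{m}x[m]x^*[m]$, then substitute into the PAPR definition \eqref{Eq:PAPR_def}. Your version is slightly more explicit in spelling out the numerator step and in verifying that each harmonic in $g(t)$ is a nonzero integer multiple of the fundamental, which is a welcome clarification but not a different method.
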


\begin{proof}
The expectation of $g(t)$ in \eqref{eq:AFDM_PAPR_expansion} is zero because it consists of a sum of trigonometric functions over a period $T$. Therefore, the average power of the signal is given by:
\begin{equation}\label{eq:AFDM_mean_power}
\mathbb{E}\big[|s(t)|^2\big] = \frac{1}{N}\sum_{m=0}^{N-1} x[m]x^*[m].
\end{equation}
Substituting \eqref{eq:AFDM_mean_power} into \eqref{Eq:PAPR_def} yields \eqref{eq:AFDM_mean_power}.
\end{proof}

\begin{lem}\label{thm:AFDM_PAPR_periodicity}
The PAPR of the Agile-AFDM system is a periodic function of $c_{2}$ with a period of $\frac{1}{2}$. This periodicity implies that within the range $c_{2} \in [0, \frac{1}{2})$, all possible PAPR values can be explored for a given data block $\mathbf{x}$.
\end{lem}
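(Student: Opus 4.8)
The plan is to show that the map $c_2 \mapsto \xi(c_2)$ is invariant under the shift $c_2 \mapsto c_2 + \tfrac{1}{2}$, which by Lemma~\ref{lem:AFDM_PAPR} reduces to showing that $g(t)$ — and hence its maximum over $t\in[0,T_{\text{sym}}]$ — is unchanged, since the denominator $\sum_m x[m]x^*[m]$ does not involve $c_2$ at all. Because $g(t)$ is built entirely from the four coefficient families $\gamma_p^{(k)}(c_2)$, $k=1,2,3,4$, and these in turn depend on $c_2$ only through the phases $\beta_{m,p}(c_2) = 2\pi c_2\, p(2m+p)$ via $\cos\beta_{m,p}$ and $\sin\beta_{m,p}$, it suffices to track how $\beta_{m,p}$ transforms under the shift.

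First I would compute $\beta_{m,p}(c_2 + \tfrac12) = 2\pi c_2\, p(2m+p) + \pi\, p(2m+p) = \beta_{m,p}(c_2) + \pi\, p(2m+p)$. The key number-theoretic observation is that $p(2m+p) = 2mp + p^2 \equiv p^2 \equiv p \pmod 2$, so $p(2m+p)$ is even when $p$ is even and odd when $p$ is odd; in particular its parity depends only on $p$, not on $m$. Hence $\beta_{m,p}(c_2+\tfrac12) = \beta_{m,p}(c_2) + \pi p \pmod{2\pi}$, i.e. the phase shift is $0$ for even $p$ and $\pi$ for odd $p$ — uniformly in $m$. For even $p$ each $\gamma_p^{(k)}$ is therefore literally unchanged; for odd $p$, every $\cos\beta_{m,p}$ and $\sin\beta_{m,p}$ flips sign, so each $\gamma_p^{(k)}(c_2+\tfrac12) = -\gamma_p^{(k)}(c_2)$.

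Next I would feed this back into the definition of $g(t)$ in \eqref{eq:AFDM_PAPR_expansion}: $g(t) = \sum_{p=1}^{N-1}\big(\gamma_p^{(1)}+\gamma_p^{(2)}\big)\cos(2\pi p t/T) + \sum_{p=1}^{N-1}\big(\gamma_p^{(3)}+\gamma_p^{(4)}\big)\sin(2\pi p t/T)$. Under $c_2 \mapsto c_2+\tfrac12$, the odd-$p$ terms pick up a factor $-1$ on both the coefficients \emph{and} we can absorb it by the substitution $t \mapsto t + T/2$, since $\cos(2\pi p (t+T/2)/T) = \cos(2\pi p t/T + \pi p) = (-1)^p\cos(2\pi p t/T)$ and likewise for $\sin$. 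Thus $g_{c_2+1/2}(t) = g_{c_2}(t + T/2)$ (reading indices of $t$ modulo $T$, which is legitimate because $g$ is $T$-periodic in $t$ as a finite trigonometric polynomial in the fundamental frequency $2\pi/T$). A shift of the argument does not change the maximum over a full period, so $\max_t g_{c_2+1/2}(t) = \max_t g_{c_2}(t)$, and therefore $\xi(c_2+\tfrac12) = \xi(c_2)$.

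The main obstacle — really the only non-routine point — is getting the parity bookkeeping exactly right and making sure the ``$-1$ on odd-$p$ coefficients'' is perfectly compensated by the half-period time shift rather than, say, leaving a residual sign on the even-$p$ block; the clean way to see this is to split $g(t)$ into its even-$p$ and odd-$p$ partial sums and observe that the shift $t\mapsto t+T/2$ acts as $+1$ on the former and $-1$ on the latter, exactly matching the action of $c_2\mapsto c_2+\tfrac12$ on the coefficients. One should also note in passing that this gives period (at most) $\tfrac12$; combined with Proposition~\ref{Prop:perio} (period $1$ in $c_2$) there is no contradiction, and the statement that all PAPR values are realized within $[0,\tfrac12)$ follows immediately.
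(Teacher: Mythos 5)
Your proposal is correct and follows essentially the same route as the paper's proof: shift $c_2$ by $\tfrac12$, observe the resulting sign factor on the $\cos\beta_{m,p}$ and $\sin\beta_{m,p}$ terms, absorb it by the time shift $t \mapsto t + T_{\text{sym}}/2$, and conclude that the maximum of $g(t)$ over a full period — hence the PAPR via Lemma~\ref{lem:AFDM_PAPR} — is unchanged. Your explicit remark that $p(2m+p) \equiv p \pmod 2$, so the sign depends only on $p$ and not on $m$, is in fact a cleaner justification of the paper's absorption step, which writes the factor as $(-1)^{p(2m+p)}$ without noting this.
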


\begin{proof}
Lemma~\ref{lem:AFDM_PAPR} indicates that analyzing the periodicity of $\max\{g(t)\}$ is sufficient for determining the periodic behavior of PAPR.

To demonstrate the periodicity, define $\widetilde{c}_{2} \triangleq c_{2} + \frac{1}{2}$ and substitute it into the phase terms $\beta_{m,p}(c_{2}) = 2\pi c_{2} p(2m+p)$. This substitution yields:
\begin{equation*}
\beta_{m,p}(\widetilde{c}_{2}) = \beta_{m,p}(c_{2}) + \pi p(2m+p).
\end{equation*}

Since $p(2m+p)$ is an integer, we have:
\begin{equation*}
\cos\beta_{m,p}(\widetilde{c}_{2}) = (-1)^{p(2m+p)}\cos\beta_{m,p}(c_{2}),
\end{equation*}
\begin{equation*}
\sin\beta_{m,p}(\widetilde{c}_{2}) = (-1)^{p(2m+p)}\sin\beta_{m,p}(c_{2}).
\end{equation*}

The key observation is that the factor $(-1)^{p(2m+p)}$ can be absorbed by a time shift. Specifically, it can be shown that:
\begin{equation*}
(-1)^{p(2m+p)}\cos\left(\frac{2\pi pt}{T_{\text{sym}}}\right) = \cos\left(\frac{2\pi p(t+T_{\text{sym}}/2)}{T}\right),
\end{equation*}
\begin{equation*}
(-1)^{p(2m+p)}\sin\left(\frac{2\pi pt}{T_{\text{sym}}}\right) = \sin\left(\frac{2\pi p(t+T_{\text{sym}}/2)}{T}\right).
\end{equation*}

This time shift does not affect the maximum value of $g(t)$, as the trigonometric functions are periodic with period $T_{\text{sym}}$. Therefore, $\max\{g(t)\}$ remains unchanged when $c_{2}$ is increased by $\frac{1}{2}$, proving the periodicity of PAPR with respect to $c_{2}$ with period $\frac{1}{2}$.

Let $\widetilde{g}(t)$ represent the value of $g(t)$ when $c_{2} = \widetilde{c}_{2}$. Using the above relations, we can express $\widetilde{g}(t)$ as:
\begin{eqnarray*}
\widetilde{g}(t) &=& \sum_{i=1,2}\sum_{p=1}^{N-1} \gamma_p^{(i)}(c_{2}) \cos\left(\frac{2\pi p(t+T_{\text{sym}}/2)}{T_{\text{sym}}}\right) \\
&& + \sum_{j=3,4}\sum_{p=1}^{N-1} \gamma_p^{(j)}(c_{2}) \sin\left(\frac{2\pi p(t+T_{\text{sym}}/2)}{T_{\text{sym}}}\right).
\end{eqnarray*}

Since the trigonometric functions are periodic with period $T_{\text{sym}}$, shifting the argument by $T_{\text{sym}}/2$ does not alter the maximum value over one period. Therefore:
\begin{equation*}
\max\{\widetilde{g}(t)\} = \max\{g(t)\}.
\end{equation*}

This confirms that PAPR is periodic in $c_{2}$ with period $\frac{1}{2}$. Consequently, the PAPR optimization can be confined to the interval $c_{2} \in [0, \frac{1}{2})$.
\end{proof}

From a signal processing perspective, the IDAFT in the Agile-AFDM system constitutes a unitary transformation that preserves the statistical properties of the signal.
For instance, if the input symbols $\mathbf{x}$ follow a complex Gaussian distribution, the transformed signal $\bm{s}$ will also exhibit a complex Gaussian distribution, maintaining the same second-order statistics. While the parameter $c_2$ influences the instantaneous power distribution of the transmitted signal, the statistical invariance property of unitary transformations ensures that the long-term average PAPR performance, characterized by $\mathbb{E}_{\mathbf{x}}[\xi(\mathbf{c},\mathbf{x})]$, remains invariant to the specific choice of $c_2$.

This statistical invariance reveals a fundamental limitation of fixed-parameter AFDM systems: while they maintain average PAPR performance, they cannot adapt to the instantaneous PAPR characteristics of individual data blocks. This underscores the critical necessity for dynamic parameter selection, where effective PAPR reduction is achieved by optimizing $c_2$ on a per-block basis as per the specific data realization $\mathbf{x}$.

\subsection{PAPR Reduction}
Leveraging the periodicity established in Lemma \ref{thm:AFDM_PAPR_periodicity}, we confine the PAPR optimization to the compact interval $c_{2} \in [0, \frac{1}{2})$. The optimization objective is formulated as
\begin{equation}\label{eq:PAPR_optimization}
c_{2}^* = \arg\min_{c_{2} \in [0,\frac{1}{2})} \max_{t \in [0,T_{\text{sym}}]} g(t),
\end{equation}
where minimizing the peak value of $g(t)$ directly corresponds to PAPR minimization.

In practical Agile-AFDM implementations, the PAPR demonstrates significant sensitivity to fine variations in $c_{2}$, necessitating a carefully calibrated search to capture the nuances in PAPR behavior. The primary challenge lies in managing the $\max{\left\{g(t)\right\}}$ operation. While theoretically an infinite norm, the $\max$ operation can be effectively approximated by the $n$-th root of the integral of $|g(t)|^n$ over time $T_{\text{sym}}$, $\sqrt[n]{\int_{0}^{T_{\text{sym}}}|g(t)|^n dt}$, even when $n$ is not particularly large \cite{closedform}. In this light, we introduce a surrogate optimization function:
\begin{equation}\label{eq:AFDM_surrogate}
I = \int_0^{T_{\text{sym}}} g(t)^4 dt.
\end{equation}

To locate the optimal $c_{2}$ that minimizes PAPR, we compute the derivative of the surrogate function
\begin{equation}\label{eq:AFDM_derivative}
I'(c_{2}) = \frac{d}{dc_{2}}\int_0^{T_{\text{sym}}} g(t)^4 dt
= 4 \int_0^{T_{\text{sym}}} g(t)^3 \cdot \frac{\partial g(t)}{\partial c_{2}} dt,
\end{equation}
where the derivative of the oscillatory component is given by
\begin{equation}\label{eq:dg_dc2}
\begin{aligned}
&\frac{\partial g(t)}{\partial c_{2}} \!\!=\!\! \sum_{p=1}^{N-1} \!\rho_{p}^{(1)}(c_{2})\!\cos\left(\frac{2\pi pt}{T_{\text{sym}}}\right) \!+\! \sum_{p=1}^{N-1} \!\rho_{p}^{(2)}(c_{2})\!\cos\left(\frac{2\pi pt}{T_{\text{sym}}}\right) \\
&+ \sum_{p=1}^{N-1} \rho_{p}^{(3)}(c_{2})\sin\left(\frac{2\pi pt}{T_{\text{sym}}}\right) + \sum_{p=1}^{N-1} \rho_{p}^{(4)}(c_{2})\sin\left(\frac{2\pi pt}{T_{\text{sym}}}\right),
\end{aligned}
\end{equation}
with the derivative coefficients defined as
\begin{equation*}
\rho_{p}^{(1)}(c_{2})\triangleq -\sum\limits_{m=0}^{N-1-p}p(2m+p)\lambda_{m,p}\sin\beta_{m,p}(c_{2}),
\end{equation*}
\begin{equation*}
\rho_{p}^{(2)}(c_{2})\triangleq-\sum\limits_{m=0}^{N-1-p}p(2m+p)\mu_{m,p}\cos\beta_{m,p}(c_{2}),
\end{equation*}
\begin{equation*}
\rho_{p}^{(3)}(c_{2})\triangleq-\sum\limits_{m=0}^{N-1-p}p(2m+p)\lambda_{m,p}\cos\beta_{m,p}(c_{2}),
\end{equation*}
\begin{equation*}
\rho_{p}^{(4)}(c_{2})\triangleq\sum\limits_{m=0}^{N-1-p}p(2m+p)\mu_{m,p}\sin\beta_{m,p}(c_{2}).
\end{equation*}

Upon first examination, \eqref{eq:AFDM_derivative} appears computationally intensive due to its inclusion of $256\left(N-1\right)^4$ individual integrals, all of which are initially perceived as complex due to their structure. However, these integrals can be efficiently computed by leveraging the orthogonality properties of trigonometric functions.

\begin{thm}\label{thm:AFDM_trigonometric}
Let $\omega_1(t), \omega_2(t), \omega_3(t), \omega_4(t) \in \{\cos(t), \sin(t)\}$.
For indices $1\le k,l,m,n\le N-1$, the integral of the product of these trigonometric functions over a full period is given by
\begin{equation*}
\int_0^{2\pi} \omega_1(kt)\omega_2(lt)\omega_3(mt)\omega_4(nt)\, dt =  \frac{\pi}{4} \bm{q}_i
\begin{bmatrix}
\begin{smallmatrix}
\delta(k + l + m + n)\\
\delta(k + l - m - n)\\
\delta(k + l + m - n)\\
\delta(k + l - m + n)\\
\delta(k - l + m + n)\\
\delta(k - l - m - n)\\
\delta(k - l + m - n)\\
\delta(k - l - m + n)
\end{smallmatrix}
\end{bmatrix},
\end{equation*}
where $\bm{q}_i$ denotes the $i^{th}$ row of matrix $\bm{Q}$:
\begin{equation*}\label{Q_AFDM}
\bm{Q} = \left[
\begin{smallmatrix}
0 & +1 & +1 & +1 & +1 & +1 & +1 & +1 \\
0 & 0 & 0 & 0 & 0 & 0 & 0 & 0 \\
0 & 0 & 0 & 0 & 0 & 0 & 0 & 0 \\
0 & -1 & +1 & +1 & -1 & -1 & +1 & +1 \\
0 & 0 & 0 & 0 & 0 & 0 & 0 & 0 \\
0 & +1 & +1 & -1 & +1 & -1 & -1 & +1 \\
0 & +1 & -1 & +1 & +1 & -1 & +1 & -1 \\
0 & 0 & 0 & 0 & 0 & 0 & 0 & 0 \\
0 & 0 & 0 & 0 & 0 & 0 & 0 & 0 \\
0 & +1 & +1 & -1 & -1 & +1 & +1 & -1 \\
0 & +1 & -1 & +1 & -1 & +1 & -1 & +1 \\
0 & 0 & 0 & 0 & 0 & 0 & 0 & 0 \\
0 & -1 & -1 & -1 & +1 & +1 & +1 & +1 \\
0 & 0 & 0 & 0 & 0 & 0 & 0 & 0 \\
0 & 0 & 0 & 0 & 0 & 0 & 0 & 0 \\
0 & +1 & -1 & -1 & -1 & -1 & +1 & +1
\end{smallmatrix}
\right],
\end{equation*}
and index $i$ is determined by $i = 8\delta\big(\omega_1(t)\!=\!\sin(t)\big) + 4\delta\big(\omega_2(t)\!=\!\sin(t)\big) +  2\delta\big(\omega_3(t)\!=\!\sin(t)\big) + \delta\big(\omega_4(t)\!=\!\sin(t)\big) + 1$.
\end{thm}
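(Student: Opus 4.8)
The plan is to expand every factor into complex exponentials and then apply the orthogonality relation $\int_0^{2\pi} e^{j\alpha t}\,dt = 2\pi\,\delta(\alpha)$, valid for $\alpha\in\mathbb{Z}$. Writing $\cos\theta = \tfrac12\big(e^{j\theta}+e^{-j\theta}\big)$ and $\sin\theta = \tfrac{1}{2j}\big(e^{j\theta}-e^{-j\theta}\big)$, I would assign to each factor $\omega_r$ the coefficient $c_r(\epsilon)=\tfrac12$ when $\omega_r=\cos$ and $c_r(\epsilon)=\tfrac{\epsilon}{2j}$ when $\omega_r=\sin$, so that $\omega_r(a_r t) = \sum_{\epsilon_r\in\{\pm1\}} c_r(\epsilon_r)\, e^{j\epsilon_r a_r t}$ with $(a_1,a_2,a_3,a_4)=(k,l,m,n)$. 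Multiplying the four factors produces a sum over sign vectors $\bm{\epsilon}=(\epsilon_1,\epsilon_2,\epsilon_3,\epsilon_4)\in\{\pm1\}^4$, and integrating term by term gives
\begin{equation*}
\int_0^{2\pi}\!\!\prod_{r=1}^{4}\omega_r(a_r t)\,dt = 2\pi\!\!\sum_{\bm\epsilon\in\{\pm1\}^4}\!\Big(\prod_{r=1}^4 c_r(\epsilon_r)\Big)\,\delta\!\big(\epsilon_1 k+\epsilon_2 l+\epsilon_3 m+\epsilon_4 n\big).
\end{equation*}

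Next I would use the involution $\bm\epsilon\leftrightarrow-\bm\epsilon$. Both vectors trigger the same Kronecker delta, whereas $\prod_r c_r(-\epsilon_r)=(-1)^{\sigma}\prod_r c_r(\epsilon_r)$ with $\sigma$ the number of sine factors. Hence when $\sigma$ is odd, every pair cancels and the integral vanishes identically; these are precisely the all-zero rows of $\bm{Q}$, i.e.\ the indices $i$ corresponding to an odd number of sine factors. When $\sigma$ is even, the paired terms coincide, so fixing $\epsilon_1=+1$ as the representative and doubling yields a sum over the eight triples $(\epsilon_2,\epsilon_3,\epsilon_4)\in\{\pm1\}^3$ with an overall factor $4\pi$. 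For $\epsilon_1=+1$ and even $\sigma$, one has $\prod_r c_r(\epsilon_r)=2^{-4}\,j^{-\sigma}\prod_{r:\,\omega_r=\sin}\epsilon_r = 2^{-4}(-1)^{\sigma/2}\prod_{r\ge2:\,\omega_r=\sin}\epsilon_r$, which is real; multiplying by $4\pi$ produces the stated prefactor $\tfrac{\pi}{4}$ and identifies the eight entries of $\bm{q}_i$ as $(-1)^{\sigma/2}\prod_{r\ge2:\,\omega_r=\sin}\epsilon_r$, evaluated over the eight sign choices $(\epsilon_2,\epsilon_3,\epsilon_4)$ in the order in which the deltas are listed in the statement.

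The remaining step is bookkeeping: checking that the index rule $i=8\,\delta(\omega_1=\sin)+4\,\delta(\omega_2=\sin)+2\,\delta(\omega_3=\sin)+\delta(\omega_4=\sin)+1$ assigns each of the sixteen sine/cosine configurations to the correct row, and that the sign pattern $(-1)^{\sigma/2}\prod_{r\ge2:\,\omega_r=\sin}\epsilon_r$ reproduces $\bm{Q}$ entry by entry. This amounts to verifying the eight even-$\sigma$ rows (one with $\sigma=0$, six with $\sigma=2$, one with $\sigma=4$) and noting the eight odd-$\sigma$ rows are zero; the entries can be cross-checked on a simple instance such as $k=l=m=n=1$, where the identity must return $\int_0^{2\pi}\cos^4 t\,dt=\int_0^{2\pi}\sin^4 t\,dt=\tfrac{3\pi}{4}$. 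I also observe that the first column of $\bm{Q}$ is immaterial, since $\delta(k+l+m+n)=0$ whenever $1\le k,l,m,n\le N-1$, and is set to zero purely for neatness. The only genuine obstacle is this sign- and ordering-accounting; the analytic content---Euler expansion, exponential orthogonality, and the $\bm\epsilon\leftrightarrow-\bm\epsilon$ parity cancellation---is entirely routine.
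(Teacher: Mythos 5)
Your proposal is correct, and it reaches the result by a route whose mechanics differ from the paper's. The paper linearizes the real product twice via product-to-sum identities, obtaining eight terms $u_5(\cdot),\dots,u_{12}(\cdot)$ at the frequencies $k\pm l\pm m\pm n$ with prefactor $\tfrac18$, and then argues by cases: when one or three of the $\omega_r$ are cosines, every resulting term is a sine and the integral vanishes (the zero rows of $\bm{Q}$); otherwise each term is $\pm\cos$ and contributes $\pm 2\pi$ times the corresponding Kronecker delta, with the signs tabulated in $\bm{Q}$. You instead expand all four factors with Euler's formula into $2^4$ exponential terms, apply $\int_0^{2\pi}e^{j\alpha t}\,dt=2\pi\delta(\alpha)$, and exploit the involution $\bm{\epsilon}\leftrightarrow-\bm{\epsilon}$: the coefficient product is multiplied by $(-1)^{\sigma}$ (with $\sigma$ the number of sine factors), so odd-$\sigma$ configurations cancel pairwise---exactly the zero rows---while for even $\sigma$ fixing $\epsilon_1=+1$ and doubling yields the prefactor $\tfrac{\pi}{4}$ and the closed-form entry $(-1)^{\sigma/2}\prod_{r\ge2:\,\omega_r=\sin}\epsilon_r$. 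That formula does reproduce every nonzero row of $\bm{Q}$ (e.g.\ rows $1$, $4$, $6$, $10$, $13$, $16$ all check out against the eight sign triples in the stated delta ordering), so the bookkeeping you defer is genuinely mechanical and no more is deferred than in the paper's own proof, which likewise does not verify $\bm{Q}$ entry by entry. What your route buys is a single uniform sign formula for the entries of $\bm{Q}$ and a structural (parity) explanation of the zero rows, plus the sanity check at $k=l=m=n=1$ returning $\tfrac{3\pi}{4}$, which the paper omits; what the paper's route buys is that it stays entirely within real trigonometric identities, matching the form in which the integrand arises in \eqref{eq:AFDM_derivative}. At bottom both are the same frequency-orthogonality argument over a full period.
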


\begin{proof}
    See Appendix \ref{sec:AppB}.
\end{proof}

\begin{algorithm}[t]
\caption{Agile-AFDM for PAPR reduction.}\label{algo:AFDM_optimization}
\begin{algorithmic}[1]
\State {\bf Input:} $N$, and data symbol vector $\mathbf{x}$.
\State {\bf Output:} Optimal chirp parameter $c_{2}^{*}$.

\State Determine the initial search set $\Omega_0$ based on Lemma \ref{thm:AFDM_PAPR_periodicity}: $c_{2} \in [0, \frac{1}{2})$.

\For{$c_{2}^{(i)} \in \Omega_0$}
    \State Compute $\gamma_{p}^{(j)}(c_{2}^{(i)})$ for $j = 1, 2, 3, 4$.
    \State Compute $I'(c_{2}^{(i)})$ by Theorem~\ref{thm:AFDM_trigonometric}.
\EndFor

\State Initialize $\Omega = \emptyset$.
\For{$c_{2}^{(i)}\in\Omega_0$}
    \If{$I'(c_{2}^{(i)}) \leq 0$ and $I'(c_{2}^{(i)} + \Delta c) \geq 0$} 
        \State $\Omega = \Omega \cup \left\{c_{2}^{(i)} + j \Delta c^{\prime} : j = 0, 1, 2, \dots, \frac{\Delta c}{\Delta c^{\prime}}\right\}$.
    \EndIf
\EndFor
 
\For{$c_{2}^{(j)}\in\Omega$}
    \If{$I'(c_{2}^{(j)}) \leq 0$ and $I'(c_{2}^{(j)} + \Delta c^{\prime}) \geq 0$}
        \State Retain $c_{2}^{(j)}$ in $\Omega$.
    \Else
        \State $\Omega=\Omega\backslash\{c_{2}^{(j)}\}$.
    \EndIf
\EndFor

\State Search through $\Omega$: $c_{2}^*=\arg\min_{c_{2}\in\Omega}\xi(c_{2})$.
\end{algorithmic}
\end{algorithm}

Theorem~\ref{thm:AFDM_trigonometric} provides an efficient method for computing $I'(c_{2})$, which underpins the design of an algorithm to identify a refined set of candidate chirp parameters, $\Omega$, for locating the optimal $c_{2}^*$ that minimizes PAPR. The algorithm is described in Algorithm~\ref{algo:AFDM_optimization}, with details on its steps outlined below.

The algorithm begins by taking two input parameters: the number of subcarriers $N$, and the data symbol vector $\mathbf{x}$. Using these inputs, it establishes an initial search range for $c_{2}$ based on Lemma \ref{thm:AFDM_PAPR_periodicity}: $c_{2} \in [0, \frac{1}{2})$. This range is then discretized with an initial step size $\Delta c$, generating an initial search set:
\begin{equation}\label{eq:AFDM_init}
\Omega_0 = \left\{i\Delta c : i = 0, 1, 2, \ldots, \lfloor \frac{1}{2\Delta c} \rfloor - 1 \right\}.
\end{equation}

For each $c_{2}^{(i)} \in \Omega_0$, the algorithm computes $\gamma_{p}^{(j)}(c_{2}^{(i)})$ for $j = 1, 2, 3, 4$, along with the corresponding $I'(c_{2}^{(i)})$ using Theorem~\ref{thm:AFDM_trigonometric}. It then identifies potential local minima by verifying if $I'(c_{2}^{(i)}) \leq 0$ and $I'(c_{2}^{(i)} + \Delta c) \geq 0$. When these conditions are met, indicating a local minimum, we form a finer set of candidate values:
\begin{equation*}
\Omega = \Omega \cup \left\{c_{2}^{(i)} + j\Delta c^{\prime} : j = 0, 1, 2, \ldots, \frac{\Delta c}{\Delta c^{\prime}}\right\},
\end{equation*}
where $\Delta c^{\prime}$ is a finer step size, and $\Omega$ is initialized as an empty set.

Next, the algorithm further narrows the finer search set $\Omega$ by iterating over each candidate $c_{2}^{(j)}\in \Omega$ and retaining only those values that satisfy $I'(c_{2}^{(j)}) \leq 0$ and $I'(c_{2}^{(j)} + \Delta c^{\prime}) \geq 0$.

In the final step, the algorithm searches over $\Omega$ to identify the optimal chirp parameter that minimizes the PAPR.

To summarize, Agile-AFDM provides a new approach to PAPR reduction in multi-carrier systems. At its core, the method leverages data-aware adaptation of the chirp parameter, enabling dynamic, block-specific PAPR minimization. 
While PAPR is notoriously analytically intractable, the proposed optimization framework, underpinned by the derived surrogate function and efficient search algorithm, provides a practical and effective solution. The resultant performance gains in PAPR reduction will be quantified and compared against benchmark schemes via simulation results in Section \ref{Sec:Simu}.

\section{Agile Waveforming for Reliable Communications}\label{Sec:SIR}
The preservation of subcarrier orthogonality stands as the cornerstone of reliable communication in high-mobility scenarios \cite{benzine_affine_2024_spawc}. While traditional static AFDM configurations provide robustness against worst-case channel conditions, they fundamentally lack the capability to adapt to the instantaneous realizations of multipath components and Doppler shifts that characterize practical wireless environments. This limitation becomes particularly acute in the context of ICI \cite{sec2_ici}, where fixed parameters cannot dynamically reconfigure the effective channel matrix $\bm{H}_{\text{eff}}^{\text{comm}}$ to suppress interference arising from specific channel realizations.

In this section, we introduce a paradigm shift from static robustness to dynamic interference suppression through Agile-AFDM. Our approach is built upon two foundational insights: first, that the structure of $\bm{H}_{\text{eff}}^{\text{comm}}$, and consequently the ICI, is jointly orchestrated by both chirp parameters $c_1$ and $c_2$; and second, that optimal parameter selection should be conditioned not only on instantaneous CSI but also on the transmitted data symbols $\bm{x}$ themselves. This data-aware optimization represents a novel philosophical approach to waveform design that extends far beyond AFDM to any parameterized modulation scheme.

\subsection{ICI Analysis and SIR}
The core objective of this section is to develop a framework for dynamically optimizing $c_1$ and $c_2$ to reconfigure the effective channel matrix, thereby reducing the ICI. The detrimental effects of ICI manifest directly in the DAF-domain received symbols. For the $p$-th subcarrier, the input-output relationship can be decomposed into three terms, as given in \eqref{eq:commy_subcarrier}, where the ICI term arises from the off-diagonal elements of $\bm{H}_{\text{eff}}^{\text{comm}}$.

To quantify and suppress ICI, we define the SIR $\zeta$ as a comprehensive metric across all subcarriers:
\begin{equation}
\zeta = \frac{1}{N} \sum_{p=0}^{N-1} \frac{ P_{\text{sig},p} }{ P_{\text{int},p} },~~~
\zeta_{\text{dB}}\triangleq 10\log_{10}\zeta
\end{equation}
where the signal and interference powers are defined as
\begin{align*}
P_{\text{sig},p} &\triangleq \left| H_{\text{eff}}^{\text{comm}}[p,p] x[p] \right|^2, \\
P_{\text{int},p} &\triangleq \left| \sum_{\substack{q=0 \\ q \neq p}}^{N-1} H_{\text{eff}}^{\text{comm}}[p,q] x[q] \right|^2.
\end{align*}

The fundamental advantage of dynamic parameter configuration in Agile-AFDM systems is captured by the following theoretical result, which quantifies the performance gain achievable through adaptive optimization.

\begin{thm} \label{thm:sir_dynamic}
Let $\mathcal{C}$ be a finite set of feasible configuration parameters, and let $\bm{x} \in \mathcal{X}$ be a random vector defined on the probability space $(\mathcal{W}, \mathcal{G}, \mathbb{P})$. Let $\zeta: \mathcal{C} \times \mathcal{X} \to \mathbb{R}$ be a measurable objective function representing the system SIR.

Define the optimal static parameter as $\bm{c}^* \in \arg\max_{\bm{c} \in \mathcal{C}} \mathbb{E}_{\bm{x}} [\zeta(\bm{c}, \bm{x})]$. The following inequality holds:
\begin{equation}
\mathbb{E}_{\bm{x}} \left[ \max_{\bm{c} \in \mathcal{C}} \zeta(\bm{c}, \bm{x}) \right] \geq \max_{\bm{c} \in \mathcal{C}} \mathbb{E}_{\bm{x}} \left[ \zeta(\bm{c}, \bm{x}) \right].
\end{equation}
Furthermore, the inequality is strict if and only if
\begin{equation}
\mathbb{P} \left( \left\{ \bm{x} \in \mathcal{X} : \zeta(\bm{c}^*, \bm{x}) < \max_{\bm{c} \in \mathcal{C}} \zeta(\bm{c}, \bm{x}) \right\} \right) > 0.
\end{equation}
In simpler terms, the strict inequality holds provided that the parameter $\bm{c}^*$, which is optimal on average, is strictly suboptimal for a set of channel realizations with non-zero probability measure.
\end{thm}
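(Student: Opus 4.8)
The plan is to prove the inequality by a pointwise comparison followed by taking expectations, and then to characterize the equality case by analyzing when the pointwise inequality is strict on a set of positive measure. First I would observe that for \emph{every} fixed realization $\bm{x} \in \mathcal{X}$ and for the particular (deterministic) parameter $\bm{c}^*$, one trivially has $\zeta(\bm{c}^*, \bm{x}) \le \max_{\bm{c} \in \mathcal{C}} \zeta(\bm{c}, \bm{x})$, simply because $\bm{c}^*$ is one of the candidates over which the maximum is taken. This is the crucial point: the left side of the claimed inequality is the expectation of the best-per-realization choice, which dominates the best fixed choice. Since $\mathcal{C}$ is finite, the map $\bm{x} \mapsto \max_{\bm{c} \in \mathcal{C}} \zeta(\bm{c}, \bm{x})$ is measurable (a finite pointwise maximum of measurable functions), so the expectation is well-defined and monotonicity of expectation applies. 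Taking $\mathbb{E}_{\bm{x}}[\cdot]$ of both sides of the pointwise bound yields
\begin{equation*}
\mathbb{E}_{\bm{x}}\!\left[ \max_{\bm{c} \in \mathcal{C}} \zeta(\bm{c}, \bm{x}) \right] \ge \mathbb{E}_{\bm{x}}\!\left[ \zeta(\bm{c}^*, \bm{x}) \right] = \max_{\bm{c} \in \mathcal{C}} \mathbb{E}_{\bm{x}}\!\left[ \zeta(\bm{c}, \bm{x}) \right],
\end{equation*}
where the last equality is just the definition of $\bm{c}^*$. This establishes the inequality.

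For the equality characterization, I would work with the nonnegative random variable $D(\bm{x}) \triangleq \max_{\bm{c} \in \mathcal{C}} \zeta(\bm{c}, \bm{x}) - \zeta(\bm{c}^*, \bm{x}) \ge 0$. The chain of (in)equalities above shows that $\mathbb{E}_{\bm{x}}[\max_{\bm{c}} \zeta(\bm{c},\bm{x})] = \max_{\bm{c}} \mathbb{E}_{\bm{x}}[\zeta(\bm{c},\bm{x})]$ holds if and only if $\mathbb{E}_{\bm{x}}[D(\bm{x})] = 0$. Then I would invoke the standard fact that a nonnegative random variable has zero expectation if and only if it is zero almost surely; contrapositively, $\mathbb{E}_{\bm{x}}[D(\bm{x})] > 0$ if and only if $\mathbb{P}(D(\bm{x}) > 0) > 0$, i.e. if and only if
\begin{equation*}
\mathbb{P}\!\left( \left\{ \bm{x} \in \mathcal{X} : \zeta(\bm{c}^*, \bm{x}) < \max_{\bm{c} \in \mathcal{C}} \zeta(\bm{c}, \bm{x}) \right\} \right) > 0,
\end{equation*}
which is exactly the stated strictness condition. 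The measurability of the event is again guaranteed because $D$ is measurable as a difference of measurable functions.

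I do not expect a genuine obstacle here; the result is essentially the Jensen-type observation that ``expectation of a maximum dominates the maximum of expectations'' specialized to a finite index set, plus the elementary $L^1$ fact about nonnegative random variables. The only points requiring a modicum of care are (i) confirming measurability of $\bm{x} \mapsto \max_{\bm{c} \in \mathcal{C}} \zeta(\bm{c}, \bm{x})$, which follows immediately from finiteness of $\mathcal{C}$ and measurability of each $\zeta(\bm{c}, \cdot)$; and (ii) ensuring the expectations are finite (or at least well-defined in $[-\infty, +\infty]$) so that subtraction and monotonicity are legitimate --- in the present setting $\zeta$ is a ratio of signal to interference powers and is bounded on the finite set $\mathcal{C}$ for each $\bm{x}$, so integrability is not an issue under the mild moment assumptions implicit in the model. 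If one wanted to be fully rigorous about edge cases (e.g. $\zeta$ unbounded), one could add a standing assumption that $\zeta(\bm{c}, \cdot) \in L^1(\mathbb{P})$ for each $\bm{c} \in \mathcal{C}$, which then makes $D \in L^1(\mathbb{P})$ automatically. The argument is short and the write-up should be correspondingly compact.
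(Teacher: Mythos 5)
Your proposal is correct and follows essentially the same route as the paper's proof: a pointwise bound $\zeta(\bm{c}^*,\bm{x}) \le \max_{\bm{c}\in\mathcal{C}}\zeta(\bm{c},\bm{x})$ followed by monotonicity of expectation, and strictness via the nonnegative gap variable (your $D$, the paper's $\Delta$) being zero in expectation iff it vanishes almost surely. Your added remarks on measurability and integrability are fine but not a departure in substance.
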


\begin{proof}
We proceed by establishing the weak inequality first, followed by the condition for strictness.

\textit{Weak inequality:}
For any fixed parameter vector $\bm{c}' \in \mathcal{C}$, and for any realization $\bm{x} \in \mathcal{X}$, the value of the objective function is bounded above by its maximum over the set $\mathcal{C}$:
$$
\zeta(\bm{c}', \bm{x}) \leq \max_{\bm{c} \in \mathcal{C}} \zeta(\bm{c}, \bm{x}).
$$
Since the expectation operator $\mathbb{E}_{\bm{x}}[\cdot]$ is monotonic (i.e., if $A \leq B$ almost surely, then $\mathbb{E}[A] \leq \mathbb{E}[B]$), taking the expectation of both sides yields:
$$
\mathbb{E}_{\bm{x}} \left[ \zeta(\bm{c}', \bm{x}) \right] \leq \mathbb{E}_{\bm{x}} \left[ \max_{\bm{c} \in \mathcal{C}} \zeta(\bm{c}, \bm{x}) \right].
$$
Since this inequality holds for \textit{any} arbitrary $\bm{c}' \in \mathcal{C}$, it must also hold for the specific parameter $\bm{c}^*$ that maximizes the right-hand term (the expected SIR). Therefore:
$$
\max_{\bm{c} \in \mathcal{C}} \mathbb{E}_{\bm{x}} \left[ \zeta(\bm{c}, \bm{x}) \right] = \mathbb{E}_{\bm{x}} \left[ \zeta(\bm{c}^*, \bm{x}) \right] \leq \mathbb{E}_{\bm{x}} \left[ \max_{\bm{c} \in \mathcal{C}} \zeta(\bm{c}, \bm{x}) \right].
$$

\textit{Strict inequality:}
Let us define a non-negative random variable $\Delta(\bm{x})$ representing the performance gap between the adaptive strategy and the optimal static strategy:
$$
\Delta(\bm{x}) \triangleq \max_{\bm{c} \in \mathcal{C}} \zeta(\bm{c}, \bm{x}) - \zeta(\bm{c}^*, \bm{x}).
$$
By definition, $\Delta(\bm{x}) \geq 0$ for all $\bm{x}$. The expectation of the performance gap is:
$$
\mathbb{E}_{\bm{x}}[\Delta(\bm{x})] = \mathbb{E}_{\bm{x}} \left[ \max_{\bm{c} \in \mathcal{C}} \zeta(\bm{c}, \bm{x}) \right] - \mathbb{E}_{\bm{x}} \left[ \zeta(\bm{c}^*, \bm{x}) \right].
$$
For the inequality to be strict (i.e., $\mathbb{E}_{\bm{x}}[\Delta(\bm{x})] > 0$), the random variable $\Delta(\bm{x})$ must not be equal to zero almost surely. This requires that there exists a set of realizations with positive probability where $\Delta(\bm{x}) > 0$.

Mathematically, if $\mathbb{P}(\Delta(\bm{x}) > 0) > 0$, then $\mathbb{E}_{\bm{x}}[\Delta(\bm{x})] > 0$. This condition implies that for some non-negligible set of channel states, the statically optimal parameter $\bm{c}^*$ fails to achieve the instantaneous maximum SIR. Conversely, if $\bm{c}^*$ is optimal for all $\bm{x}$ (almost surely), the gap is zero and equality holds. 

Thus, under the condition that no single parameter configuration is universally optimal, adaptive parameter selection yields a strictly higher expected SIR.
\end{proof}

\subsection{SIR Maximization via Fractional Programming (FP)}
The SIR performance is governed by the chirp parameters $c_1$ and $c_2$ through their influence on the effective channel matrix elements $H_{\text{eff}}^{\text{comm}}[p,q]$. Achieving optimal SIR requires solving the following optimization problem:
\begin{equation} \label{eq:sir_optimization}
\{c_1^*, c_2^*\} = \underset{c_1, c_2 \in [0,1)}{\arg\max} \ \zeta(c_1, c_2),
\end{equation}
where $\zeta(c_1, c_2)$ denotes the system SIR as a function of the chirp parameters.

\begin{algorithm}[t]
\caption{Agile-AFDM for SIR maximization.}
\label{alg:block_wise_adam}
\begin{algorithmic}[1]

\State \textbf{Input:} Channel parameters $\{l_i, \nu_i, h_i\}_{i=1}^P$, 
\State \hspace{0.95cm} convergence tolerance $\epsilon_{\text{con}}$, 
\State \hspace{0.95cm} parameter space division $B_1 \times B_2$,
\State \hspace{0.95cm} maximum iterations $I_{\text{max}}$,
\State \hspace{0.95cm} number of subcarriers $N$, channel realizations $N_h$,
\State \hspace{0.95cm} number of data blocks $N_b$,
\State \hspace{0.95cm} Adam hyperparameters $\alpha$, $\beta_1$, $\beta_2$, $\epsilon_{\text{adam}}$
\State \textbf{Output:} Optimal chirp parameters $\{c_1^*, c_2^*\}$, 
\State \hspace{1.2cm} achieved SIR $\zeta^*$

\State $\zeta^* \leftarrow -\infty$, $\bm{c}^* \leftarrow [0, 0]$
\State Partition parameter space $[0,1)^2$ into $B_1 \times B_2$ blocks

\For{each data block in parallel} \label{line:block_loop}
    \For{each parameter block $b = 1$ to $B_1 \times B_2$ in parallel} \label{line:param_block_loop}
        \State $\bm{c}^{(0)} \leftarrow \text{grid point of block } b$, $z_p^{(0)} \leftarrow 1$, $k_{\text{fp}} \leftarrow 0$
        
        \Repeat \label{line:fp_repeat}            
            \State Update auxiliary variables $\{z_p^{(k_{\text{fp}}+1)}\}_{p=0}^{N-1}$
            
            \State $\bm{m} \leftarrow \bm{0}$, $\bm{v} \leftarrow \bm{0}$, $t \leftarrow 0$, $\mathrm{iter} \leftarrow 0$
            
            \Repeat \label{line:adam_repeat}
                \State Compute $f(\bm{c})$ using \eqref{eq:quadratic_transform}
                \State $\bm{g} \leftarrow \nabla_{\bm{c}} f(\bm{c}^{(\mathrm{iter})})$
                \State Update Adam: $\bm{m}, \bm{v}, t$,
                \State and apply bias correction
                \State $\bm{c}^{(\mathrm{iter}+1)} \leftarrow \bm{c}^{(\mathrm{iter})} + \alpha\hat{\bm{m}} \oslash (\sqrt{\hat{\bm{v}}} + \epsilon_{\text{adam}})$
                \State $\bm{c}^{(\mathrm{iter}+1)} \leftarrow \text{mod}(\bm{c}^{(\mathrm{iter}+1)}, 1)$, 
                \State $\mathrm{iter} \leftarrow \mathrm{iter} + 1$
            \Until{$\|\bm{c}^{(\mathrm{iter})} - \bm{c}^{(\mathrm{iter}-1)}\| < \epsilon_{\text{con}}$ or $\mathrm{iter} > I_{\text{max}}$} \label{line:adam_until}
            
            \State $\bm{c}^{(k_{\text{fp}}+1)} \leftarrow \bm{c}^{(\mathrm{iter})}$, $k_{\text{fp}} \leftarrow k_{\text{fp}} + 1$
        \Until{$\|\bm{c}^{(k_{\text{fp}})} - \bm{c}^{(k_{\text{fp}}-1)}\| < \epsilon_{\text{con}}$ or $k_{\text{fp}} > I_{\text{max}}$} \label{line:fp_until}
        
        \State $\zeta_{\text{current}} \leftarrow \frac{1}{N} \sum_{p=0}^{N-1} \frac{P_{\text{sig},p}(\bm{c}^{(k_{\text{fp}})})}{P_{\text{ICI},p}(\bm{c}^{(k_{\text{fp}})}) + \delta}$
        \If{$\zeta_{\text{current}} > \zeta^*$}
            \State $\zeta^* \leftarrow \zeta_{\text{current}}$, $\bm{c}^* \leftarrow \bm{c}^{(k_{\text{fp}})}$
        \EndIf
    \EndFor \label{line:param_block_end}
\EndFor \label{line:block_end}

\end{algorithmic}
\end{algorithm}

The SIR optimization problem in \eqref{eq:sir_optimization} exhibits a highly non-convex landscape with multiple local optima, posing significant challenges for direct optimization. To address this, we propose a block-wise alternating optimization framework that integrates FP \cite{stancu2012fractional} to handle the sum-of-ratios structure with the adaptive moment estimation (Adam) optimizer \cite{kingma2014adam} for efficient parameter updates. The core SIR maximization can be formulated as a sum-of-ratios FP problem:
\begin{equation}
\label{eq:sum_ratios}
\max_{c_1, c_2} \sum_{p=0}^{N-1} \frac{P_{\text{sig},p}(c_1, c_2)}{P_{\text{ICI},p}(c_1, c_2)},
\end{equation}
which remains inherently non-convex. 

By applying the quadratic transform \cite{shen2018fractional}, we introduce auxiliary variables $\bm{z} = \{z_p\}_{p=0}^{N-1}$ to reformulate this into an equivalent surrogate problem:
\begin{equation}
\label{eq:quadratic_transform}
\max_{c_1, c_2, \bm{z}} \sum_{p=0}^{N-1} 
\left( 
2z_p\sqrt{P_{\text{sig},p}(c_1, c_2)} - z_p^2 P_{\text{ICI},p}(c_1, c_2)
\right).
\end{equation}
This transformation effectively decouples the numerator and denominator of each ratio, enabling efficient optimization through alternating updates between the auxiliary variables $\bm{z}$ and the chirp parameters $(c_1, c_2)$.

To mitigate the effects of local optima, we partition the parameter space $[0,1)^2$ into $B_1 \times B_2$ uniform blocks. Within each block, we initialize the search from uniformly distributed grid points to thoroughly explore the parameter space and minimize the risk of premature convergence to suboptimal local optima. The Adam optimizer then optimizes the surrogate problem through alternating updates. The gradient of the surrogate objective $f(\bm{c}, \bm{z}) = \sum_{p=0}^{N-1} \left(2z_p\sqrt{P_{\text{sig},p}(\bm{c})} - z_p^2 P_{\text{ICI},p}(\bm{c})\right)$ with respect to $\bm{c} = [c_1, c_2]$ is computed numerically via central differences:
\begin{equation*}
\frac{\partial f}{\partial c_i} \approx \frac{f(c_i + \Delta_c, \bm{z}) - f(c_i - \Delta_c, \bm{z})}{2\Delta_c}, \quad i = 1,2,
\end{equation*}
where $\Delta_c$ is a small perturbation (e.g., $10^{-6}$) ensuring numerical stability.

The complete optimization algorithm is summarized in Algorithm \ref{alg:block_wise_adam}, which proceeds as follows. Given the input channel parameters, convergence tolerance $\epsilon_{\text{con}}$, maximum iterations $I_{\text{max}}$, parameter space division $B_1 \times B_2$, number of data blocks $N_b$, and Adam hyperparameters $\alpha$, $\beta_1$, $\beta_2$, $\epsilon_{\text{adam}}$, we initialize the global best SIR $\zeta^* \leftarrow -\infty$ and optimal parameters $\bm{c}^* \leftarrow [0,0]$. The parameter space $[0,1)^2$ is partitioned into $B_1 \times B_2$ uniform blocks $\{\mathcal{B}_b\}_{b=1}^{B_1B_2}$. Each data block is then processed in parallel: for each parameter block $\mathcal{B}_b$, we initialize parameters $\bm{c}^{(0)}$ at the corresponding grid point and initialize Adam variables (momentum $\bm{m}^{(0)} \leftarrow \bm{0}$, variance $\bm{v}^{(0)} \leftarrow \bm{0}$, iteration counter $t \leftarrow 0$, internal iteration $\mathrm{iter} \leftarrow 0$). The alternating update loop proceeds as follows: auxiliary variables $\bm{z}$ are first updated for each subcarrier $p$ as
\begin{equation}
z_p^{(\mathrm{iter}+1)} = \frac{\sqrt{P_{\text{sig},p}(\bm{c}^{(\mathrm{iter})})}}{P_{\text{ICI},p}(\bm{c}^{(\mathrm{iter})}) + \delta},
\end{equation}
where $\delta$ is a regularization term preventing division by zero. The gradient $\bm{g}^{(\mathrm{iter})} = \nabla_{\bm{c}} f(\bm{c}^{(\mathrm{iter})}, \bm{z}^{(\mathrm{iter}+1)})$ is then computed using central differences. Adam's momentum and variance are updated by incrementing $t$:
\begin{equation*}
\bm{m}^{(\mathrm{iter})} = \beta_1 \bm{m}^{(\mathrm{iter}-1)} + (1-\beta_1)\bm{g}^{(\mathrm{iter})},
\end{equation*}
\begin{equation*}
\bm{v}^{(\mathrm{iter})} = \beta_2 \bm{v}^{(\mathrm{iter}-1)} + (1-\beta_2)(\bm{g}^{(\mathrm{iter})} \odot \bm{g}^{(\mathrm{iter})}),
\end{equation*}
with hyperparameters $\beta_1$ and $\beta_2$. Bias correction is applied:
\begin{equation*}
\hat{\bm{m}} = \frac{\bm{m}^{(\mathrm{iter})}}{1-\beta_1^t}, \quad \hat{\bm{v}} = \frac{\bm{v}^{(\mathrm{iter})}}{1-\beta_2^t},
\end{equation*}
followed by parameter updates with adaptive step size:
\begin{equation*}
\bm{c}^{(\mathrm{iter}+1)} = \bm{c}^{(\mathrm{iter})} + \alpha \cdot \frac{\hat{\bm{m}}}{\sqrt{\hat{\bm{v}}} + \epsilon_{\text{adam}}},
\end{equation*}
where $\alpha$ is the learning rate and $\epsilon_{\text{adam}}$ ensures numerical stability. Periodicity is enforced via $\bm{c}^{(\mathrm{iter}+1)} = \text{mod}(\bm{c}^{(\mathrm{iter}+1)}, 1)$ to maintain parameters within $[0,1)$. Internal iterations terminate when $\|\bm{c}^{(\mathrm{iter})} - \bm{c}^{(\mathrm{iter}-1)}\| < \epsilon_{\text{con}}$ or $\mathrm{iter} > I_{\text{max}}$. The SIR for the converged parameters is computed as:
\begin{equation*}
\zeta_{\text{current}} = \frac{1}{N} \sum_{p=0}^{N-1} \frac{P_{\text{sig},p}(\bm{c}^{(\mathrm{iter})})}{P_{\text{ICI},p}(\bm{c}^{(\mathrm{iter})}) + \delta},
\end{equation*}
and the global optimum is updated if $\zeta_{\text{current}} > \zeta^*$. 

Finally, the algorithm returns the globally optimal parameters $\bm{c}^* = [c_1^*, c_2^*]$ and the maximum achieved SIR $\zeta^*$.
\begin{rem}
   The computational complexity of Algorithm~\ref{alg:block_wise_adam} can be analyzed as follows. 
   The algorithm processes $N_b$ data blocks and $B_1 \times B_2$ parameter blocks in parallel. Each parameter block undergoes an FP outer loop with at most $I_{\text{max}}$ iterations, and each FP iteration invokes an Adam optimizer running for $I_{\text{max}}$ iterations. 
   
   The objective function evaluation per Adam iteration requires computing $P_{\text{sig},p}$ and $P_{\text{ICI},p}$ for all $N$ subcarriers, with costs of $\mathcal{O}(P)$ and $\mathcal{O}(NP)$ per subcarrier, respectively. This leads to an overall complexity of $\mathcal{O}(N^2 P)$ per function evaluation. Averaging over $N_h$ channel realizations gives $\mathcal{O}(N_h N^2 P)$. Considering the gradient computations, the overall complexity becomes $\mathcal{O}(N_b B_1 B_2 I_{\text{max}}^2 N_h N^2 P)$. 
   
   In contrast, exhaustive grid search over a discretization grid with resolution $G$ in each dimension requires $\mathcal{O}(G^2 \cdot N_h N^2 P)$ evaluations, which becomes prohibitive for fine-grained parameter space exploration. The proposed block-wise optimization approach offers significant computational advantage through its parallel processing and adaptive convergence, particularly when $B_1 B_2 I_{\text{max}}^2 \ll G^2$. 
\end{rem}

Overall, our optimization framework enables Agile-AFDM to dynamically reconfigure chirp parameters for each transmission block, achieving substantial ICI suppression and communication reliability gains, as demonstrated in Section \ref{Sec:Simu}.

\section{Agile Waveforming for Sensing Accuracy}\label{Sec:CRLB}
The convergence of communication and sensing in next-generation wireless systems imposes stringent dual requirements on the physical-layer waveform \cite{ISACsurvey}. Beyond ensuring reliable data transmission, the waveform must also function as a precise radar probe, capable of high-accuracy estimation of target parameters such as range and velocity. The fundamental limit of this estimation accuracy is quantified by the CRLB \cite{mendel1995lessons}, which provides a lower bound on the variance of any unbiased estimator. While AFDM has demonstrated inherent robustness for sensing in doubly-dispersive channels, existing studies have largely evaluated its performance with static chirp parameters, designed for worst-case channel conditions. This static approach fails to exploit a critical degree of freedom: the potential to dynamically tailor the waveform's time-frequency structure to the instantaneous sensing scenario, thereby minimizing estimation uncertainty.

In this section, we unlock the full potential of Agile-AFDM for high-precision sensing. We move beyond the static paradigm by formulating and solving a dynamic, per-block optimization problem: minimizing the CRLB for delay and Doppler estimation through real-time adaptation of the chirp parameters $(c_1, c_2)$. This transforms AFDM from a robust sensing waveform into an intelligent one, capable of self-optimizing its estimation performance.

\subsection{CRLB Analysis}

To establish the foundation for chirp adaptation, we first derive a closed-form expression of the CRLB for Agile-AFDM, explicitly accounting for the influence of the chirp parameters. Based on the sensing signal model in Section~\ref{Sec:System Model}, our goal is to estimate the target's delay $l$ and normalized Doppler shift $\nu$ from the received echo $\mathbf{y}^{\text{sens}}$ in \eqref{eq:sensingy}.

The unknown parameter vector includes both the parameters of interest $[l, \nu]^T\triangleq \boldsymbol{\theta}$ and the complex reflection coefficient $\alpha$, which is treated as a nuisance parameter:
\begin{equation}
\boldsymbol{\eta} \triangleq [\boldsymbol{\theta}^T, \boldsymbol{\kappa}^T]^T = [l, \nu, \alpha_{\mathrm{R}}, \alpha_{\mathrm{I}}]^T,
\end{equation}
where $\boldsymbol{\kappa} \triangleq [\alpha_{\mathrm{R}}, \alpha_{\mathrm{I}}]^T$ represents the real and imaginary parts of $\alpha$.

Given the input-output relationship in \eqref{eq:sensingy}, the received echo in the DAF domain is expressed as
\begin{equation*}
\bm{y}^{\text{sens}} = \boldsymbol{\chi}(\boldsymbol{\eta}) + \tilde{\bm{w}}^{\text{sens}},
\end{equation*}
where $\boldsymbol{\chi}(\boldsymbol{\eta}) \triangleq \bm{H}_{\text{eff}}^{\text{sens}} \bm{x}$ represents the noise-free signal vector.
To isolate the parameters of interest, we decompose the effective channel as $\bm{H}_{\text{eff}}^{\text{sens}} = \alpha \bm{G}(l, \nu)$. Consequently, the mean vector becomes
\begin{equation*}
\boldsymbol{\chi}(\boldsymbol{\eta}) = \alpha \underbrace{\bm{G}(l, \nu) \bm{x}}_{\bm{u}(l, \nu)} = \alpha \bm{u},
\end{equation*}
where $\bm{u}$ denotes the normalized reference signal vector, which depends on $l$ and $\nu$ but is independent of the reflection coefficient $\alpha$.

For a parameter vector estimated from observations in complex AWGN, the elements of the Fisher Information Matrix (FIM) $\mathbf{J} \in \mathbb{R}^{4 \times 4}$ follow the standard form\cite{mcrlb2}:
\begin{equation} \label{eq:FIM_def}
[\mathbf{J}]_{m,n} = \frac{2}{N_0} \Re\left\{ \left( \frac{\partial \boldsymbol{\chi}}{\partial \eta_m} \right)^H \frac{\partial \boldsymbol{\chi}}{\partial \eta_n} \right\}.
\end{equation}

The partial derivatives of $\boldsymbol{\chi}$ with respect to the parameters of interest and the nuisance parameters are derived as:
\begin{subequations}
\begin{align*}
\frac{\partial \boldsymbol{\chi}}{\partial l} &= \alpha \frac{\partial \bm{u}}{\partial l} \triangleq \alpha \bm{u}_l, \quad
\frac{\partial \boldsymbol{\chi}}{\partial \nu} = \alpha \frac{\partial \bm{u}}{\partial \nu} \triangleq \alpha \bm{u}_\nu, \\
\frac{\partial \boldsymbol{\chi}}{\partial \alpha_{\mathrm{R}}} &= \bm{u}, \quad
\frac{\partial \boldsymbol{\chi}}{\partial \alpha_{\mathrm{I}}} = j \bm{u}.
\end{align*}
\end{subequations}
Substituting these derivatives into \eqref{eq:FIM_def}, we can partition the FIM into blocks:
\begin{equation}
\mathbf{J} = \begin{bmatrix}
\mathbf{J}_{\boldsymbol{\theta \theta}} & \mathbf{J}_{\boldsymbol{\theta \kappa}} \\
\mathbf{J}_{\boldsymbol{\kappa \theta}} & \mathbf{J}_{\boldsymbol{\kappa \kappa}}
\end{bmatrix}.
\end{equation}
Specifically, the sub-matrices are
\begin{equation*}
\mathbf{J}_{\boldsymbol{\theta \theta}} = \frac{2|\alpha|^2}{N_0} \begin{bmatrix}
\|\bm{u}_l\|^2 & \Re\{\langle \bm{u}_l, \bm{u}_\nu \rangle\} \\
\Re\{\langle \bm{u}_l, \bm{u}_\nu \rangle\} & \|\bm{u}_\nu\|^2
\end{bmatrix},
\end{equation*}
\begin{equation*}
\mathbf{J}_{\boldsymbol{\kappa \kappa}} = \frac{2}{N_0} \|\bm{u}\|^2 \mathbf{I}_2,
\end{equation*}
\begin{equation*}
\mathbf{J}_{\boldsymbol{\theta \kappa}} = \mathbf{J}_{\boldsymbol{\kappa \theta}}^T = \frac{2}{N_0} \begin{bmatrix}
\Re\{(\alpha)^* \langle \bm{u}_l, \bm{u} \rangle\} & \Im\{(\alpha)^* \langle \bm{u}_l, \bm{u} \rangle\} \\
\Re\{(\alpha)^* \langle \bm{u}_\nu, \bm{u} \rangle\} & \Im\{(\alpha)^* \langle \bm{u}_\nu, \bm{u} \rangle\}
\end{bmatrix},
\end{equation*}
where $\mathbf{I}_2$ denotes the $2 \times 2$ identity matrix.

The CRLB for the parameters of interest $\boldsymbol{\theta}$ is obtained by computing the inverse of the Schur complement of $\mathbf{J}_{\boldsymbol{\kappa \kappa}}$ in $\mathbf{J}$. This yields the effective FIM:
\begin{equation}
\mathbf{J}_{\text{eff}} = \mathbf{J}_{\boldsymbol{\theta \theta}} - \mathbf{J}_{\boldsymbol{\theta \kappa}} \mathbf{J}_{\boldsymbol{\kappa \kappa}}^{-1} \mathbf{J}_{\boldsymbol{\kappa \theta}} = \frac{2|\alpha|^2}{N_0} \begin{bmatrix}
\Phi_{l} & \Xi \\
\Xi & \Phi_{\nu}
\end{bmatrix}.
\end{equation}
Here, the scalar coefficients $\Phi_{l}$, $\Phi_{\nu}$, and $\Xi$ represent the effective delay information, effective Doppler information, and the delay-Doppler coupling factor, respectively. Geometrically, $\Phi_{l}$ and $\Phi_{\nu}$ correspond to the squared norms of the derivative vectors $\bm{u}_l$ and $\bm{u}_\nu$ projected onto the orthogonal complement of the subspace spanned by $\bm{u}$. This projection effectively removes the uncertainty introduced by the unknown complex gain $\alpha$. They are defined, respectively, as
\begin{subequations}
\begin{align*}
\Phi_{l} & \triangleq \|\bm{u}_l\|^2 - \frac{|\langle \bm{u}_l, \bm{u} \rangle|^2}{\|\bm{u}\|^2}, \\
\Phi_{\nu} & \triangleq \|\bm{u}_\nu\|^2 - \frac{|\langle \bm{u}_\nu, \bm{u} \rangle|^2}{\|\bm{u}\|^2}, \\
\Xi & \triangleq \Re\{\langle \bm{u}_l, \bm{u}_\nu \rangle\} - \frac{\Re\{ \langle \bm{u}_l, \bm{u} \rangle \langle \bm{u}, \bm{u}_\nu \rangle \}}{\|\bm{u}\|^2}.
\end{align*}
\end{subequations}

\begin{prop}
\label{prop:CRLB}
The CRLBs for delay and Doppler estimation in an Agile-AFDM system, with unknown reflection coefficient, are given by
\begin{align*}
\operatorname{CRLB}(l) &= \frac{1}{2\mathrm{SNR}} \frac{\Phi_{\nu}}{\Phi_{l}\Phi_{\nu} - \Xi^2}, \\
\operatorname{CRLB}(\nu) &= \frac{1}{2\mathrm{SNR}} \frac{\Phi_{l}}{\Phi_{l}\Phi_{\nu} - \Xi^2},
\end{align*}
where $\mathrm{SNR}$ denotes the effective signal-to-noise ratio $|\alpha|^2/N_0$. 
\end{prop}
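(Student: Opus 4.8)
The plan is to recognize $\operatorname{CRLB}(l)$ and $\operatorname{CRLB}(\nu)$ as the diagonal entries of the inverse of the effective Fisher information matrix $\mathbf{J}_{\text{eff}}$. I would first invoke the standard result that, for a partitioned FIM with parameter of interest $\boldsymbol{\theta}$ and nuisance parameter $\boldsymbol{\kappa}$, the top-left block of $\mathbf{J}^{-1}$ equals $\mathbf{J}_{\text{eff}}^{-1}$ with $\mathbf{J}_{\text{eff}} = \mathbf{J}_{\boldsymbol{\theta\theta}} - \mathbf{J}_{\boldsymbol{\theta\kappa}}\mathbf{J}_{\boldsymbol{\kappa\kappa}}^{-1}\mathbf{J}_{\boldsymbol{\kappa\theta}}$ (the Schur complement of $\mathbf{J}_{\boldsymbol{\kappa\kappa}}$), so that $\operatorname{CRLB}(\theta_i) = [\mathbf{J}_{\text{eff}}^{-1}]_{ii}$. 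The proof then reduces to two mechanical steps: verifying the closed form of $\mathbf{J}_{\text{eff}}$ already displayed in the text, and inverting the resulting $2\times 2$ matrix.

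For the first step, I would plug $\mathbf{J}_{\boldsymbol{\kappa\kappa}}^{-1} = \tfrac{N_0}{2\|\bm u\|^{2}}\mathbf{I}_2$ and $\mathbf{J}_{\boldsymbol{\kappa\theta}} = \mathbf{J}_{\boldsymbol{\theta\kappa}}^{T}$ into the correction term. Writing $a_i \triangleq \alpha^*\langle \bm u_i, \bm u\rangle$ for $i \in \{l,\nu\}$, the entries of $\mathbf{J}_{\boldsymbol{\theta\kappa}}\mathbf{J}_{\boldsymbol{\kappa\kappa}}^{-1}\mathbf{J}_{\boldsymbol{\kappa\theta}}$ collapse to $\tfrac{2}{N_0\|\bm u\|^{2}}\bigl(\Re\{a_i\}\Re\{a_j\} + \Im\{a_i\}\Im\{a_j\}\bigr) = \tfrac{2}{N_0\|\bm u\|^{2}}\Re\{a_i a_j^*\}$. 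Using $a_i a_i^* = |\alpha|^2|\langle \bm u_i,\bm u\rangle|^2$ and $a_l a_\nu^* = |\alpha|^2\langle \bm u_l,\bm u\rangle\langle \bm u,\bm u_\nu\rangle$ (the latter via the convention $\langle \bm a,\bm b\rangle^* = \langle \bm b,\bm a\rangle$), this correction matches term-by-term the subtracted pieces in the definitions of $\Phi_l$, $\Phi_\nu$, and $\Xi$, yielding $\mathbf{J}_{\text{eff}} = \tfrac{2|\alpha|^2}{N_0}\bigl[\begin{smallmatrix}\Phi_l & \Xi \\ \Xi & \Phi_\nu\end{smallmatrix}\bigr]$.

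For the second step, a direct inversion gives $\mathbf{J}_{\text{eff}}^{-1} = \tfrac{N_0}{2|\alpha|^2}\,(\Phi_l\Phi_\nu - \Xi^2)^{-1}\bigl[\begin{smallmatrix}\Phi_\nu & -\Xi \\ -\Xi & \Phi_l\end{smallmatrix}\bigr]$. Reading off the $(1,1)$ and $(2,2)$ entries and substituting $\mathrm{SNR} = |\alpha|^2/N_0$, so that $\tfrac{N_0}{2|\alpha|^2} = \tfrac{1}{2\,\mathrm{SNR}}$, produces exactly the claimed expressions for $\operatorname{CRLB}(l)$ and $\operatorname{CRLB}(\nu)$. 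Invertibility is not a real obstacle: $\mathbf{J}_{\boldsymbol{\kappa\kappa}}$ is nonsingular whenever $\bm u \neq \bm 0$, and $\Phi_l\Phi_\nu - \Xi^2 > 0$ follows from $\mathbf{J}_{\text{eff}}$ being (a positive multiple of) the Gram matrix of $\bm u_l$ and $\bm u_\nu$ after orthogonal projection onto the complement of $\operatorname{span}\{\bm u\}$, hence positive definite under the mild non-degeneracy that these two projected derivative vectors are linearly independent.

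The only part demanding genuine care — and the step I would write out in full — is the complex-arithmetic bookkeeping in the first step: tracking which factors carry $\alpha$ as opposed to $\alpha^*$, correctly pairing real and imaginary parts when forming $\mathbf{J}_{\boldsymbol{\theta\kappa}}\mathbf{J}_{\boldsymbol{\kappa\kappa}}^{-1}\mathbf{J}_{\boldsymbol{\kappa\theta}}$, and checking that the resulting off-diagonal entry reproduces the $\Re\{\langle \bm u_l,\bm u\rangle\langle\bm u,\bm u_\nu\rangle\}$ appearing in $\Xi$ rather than some other combination. Everything else — the Schur-complement identity, the $2\times2$ inversion, and the SNR substitution — is routine.
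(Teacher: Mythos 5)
Your proposal is correct and follows essentially the same route as the paper: the CRLBs are read off as the diagonal entries of the inverse of the $2\times 2$ effective FIM $\mathbf{J}_{\text{eff}} = \mathbf{J}_{\boldsymbol{\theta\theta}} - \mathbf{J}_{\boldsymbol{\theta\kappa}}\mathbf{J}_{\boldsymbol{\kappa\kappa}}^{-1}\mathbf{J}_{\boldsymbol{\kappa\theta}}$, whose adjugate/determinant structure gives the $\Phi_\nu$ (resp.\ $\Phi_l$) numerators over $\Phi_l\Phi_\nu-\Xi^2$. Your extra verification that the Schur-complement correction reproduces $\Phi_l$, $\Phi_\nu$, and $\Xi$ (via $\Re\{a_i a_j^*\}$) merely fills in bookkeeping the paper carries out in the text preceding the proposition, so there is no substantive difference in approach.
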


\begin{proof}
The CRLBs can be derived by inverting the $2 \times 2$ matrix $\mathbf{J}_{\text{eff}}$.  The denominator $\Phi_{l}\Phi_{\nu} - \Xi^2$ corresponds to the determinant of the normalized information matrix, while the numerator terms arise from the diagonal swap in the adjugate matrix.
\end{proof}

\begin{rem}
\label{rem:known_alpha}
Following Proposition \ref{prop:CRLB}, it is insightful to contextualize our result by comparing it with the theoretical limit achievable when the reflection coefficient $\alpha$ is perfectly known. This comparison clarifies the structure of the effective information terms $\Phi_l$ and $\Phi_{\nu}$.

In this case, the unknown parameter vector reduces to $\boldsymbol{\theta} = [l, \nu]^T$. The FIM for the known-$\alpha$ case, denoted as $\mathbf{J}_{\text{ideal}}$, is directly given by the top-left block of the full FIM but without the projection loss terms:
\begin{equation}
\mathbf{J}_{\text{ideal}} = 2\mathrm{SNR} \begin{bmatrix}
\|\bm{u}_l\|^2 & \Re\{\langle \bm{u}_l, \bm{u}_\nu \rangle\} \\
\Re\{\langle \bm{u}_l, \bm{u}_\nu \rangle\} & \|\bm{u}_\nu\|^2
\end{bmatrix}.
\end{equation}
Inverting this matrix yields the ideal CRLBs:
\begin{align*}
\operatorname{CRLB}_{\text{ideal}}(l) &= \frac{1}{2\mathrm{SNR}} \frac{\|\bm{u}_\nu\|^2}{\|\bm{u}_l\|^2 \|\bm{u}_\nu\|^2 - (\Re\{\langle \bm{u}_l, \bm{u}_\nu \rangle\})^2}, \\
\operatorname{CRLB}_{\text{ideal}}(\nu) &= \frac{1}{2\mathrm{SNR}} \frac{\|\bm{u}_l\|^2}{\|\bm{u}_l\|^2 \|\bm{u}_\nu\|^2 - (\Re\{\langle \bm{u}_l, \bm{u}_\nu \rangle\})^2}.
\end{align*}
Comparing these with Proposition \ref{prop:CRLB}, we observe that $\Phi_{l} \leq \|\bm{u}_l\|^2$ and $\Phi_{\nu} \leq \|\bm{u}_\nu\|^2$. The term $\frac{|\langle \bm{u}_l, \bm{u} \rangle|^2}{\|\bm{u}\|^2}$ in the definition of $\Phi_{l}$ (and similarly for $\Phi_{\nu}$) quantifies the information loss due to the estimation of the unknown reflection coefficient.
\end{rem}

\begin{rem}
By interchanging the expectation and minimization operators, we have
\begin{equation*}
\mathbb{E}_{\bm{x}} \left[ \min_{\bm{c} \in \mathcal{C}} \operatorname{CRLB}(\bm{c}, \bm{x}) \right] \leq \min_{\bm{c} \in \mathcal{C}} \mathbb{E}_{\bm{x}} \left[ \operatorname{CRLB}(\bm{c}, \bm{x}) \right].
\end{equation*}
This inequality is strict whenever the statically optimal parameter is suboptimal for a non-negligible set of data and channel realizations. Thus, Agile-AFDM guarantees a lower average estimation error bound than any static configuration.
\end{rem}

\subsection{CRLB Minimization}

\begin{algorithm}[t]
\caption{Agile-AFDM for CRLB optimization.}
\label{alg:pso_crlb}
\begin{algorithmic}[1]
\State \textbf{Input:} Number of particles $N_{\text{par}}$,
\State \hspace{0.95cm} maximum iterations $I_{\max}$,
\State \hspace{0.95cm} inertia weights $\omega_0$, $\omega_f$,
\State \hspace{0.95cm} acceleration coefficients $\phi_c$, $\phi_s$,
\State \hspace{0.95cm} convergence tolerance $\epsilon_{\text{con}}$
\State \textbf{Output:} Optimal chirp parameters $\mathbf{c}^* = [c_1^*, c_2^*]^T$,
\State \hspace{1.2cm} achieved CRLB value $\operatorname{CRLB}^*$

\State Initialize particles $\{\mathbf{c}_i^{(0)}\}_{i=1}^{N_{\text{par}}}$ randomly in $[0,1)^2$
\State Initialize velocities $\{\mathbf{v}_i^{(0)}\}_{i=1}^{N_{\text{par}}} = \mathbf{0}$
\State Set personal bests: $\mathbf{p}_i^{(0)} = \mathbf{c}_i^{(0)}$, $\mathcal{L}_{\text{p},i}^{(0)} = \operatorname{CRLB}(\mathbf{c}_i^{(0)})$
\State Set global best: $\mathbf{g}^{(0)} = \arg\min_i \mathcal{L}_{\text{p},i}^{(0)}$, $\operatorname{CRLB}^* = \min_i \mathcal{L}_{\text{p},i}^{(0)}$, $\mathrm{iter} = 0$

\Repeat \label{line:main_loop}
    \For{$i = 1$ to $N_{\text{par}}$ in parallel} \label{line:particle_loop}
        \State Evaluate objective: $\mathcal{L}_i \leftarrow \operatorname{CRLB}(\mathbf{c}_i^{(\mathrm{iter})})$
        \If{$\mathcal{L}_i < \mathcal{L}_{\text{p},i}^{(\mathrm{iter})}$}
            \State $\mathbf{p}_i^{(\mathrm{iter}+1)} \leftarrow \mathbf{c}_i^{(\mathrm{iter})}$, $\mathcal{L}_{\text{p},i}^{(\mathrm{iter}+1)} \leftarrow \mathcal{L}_i$ \Comment{Update personal best}
        \Else
            \State $\mathbf{p}_i^{(\mathrm{iter}+1)} \leftarrow \mathbf{p}_i^{(\mathrm{iter})}$, $\mathcal{L}_{\text{p},i}^{(\mathrm{iter}+1)} \leftarrow \mathcal{L}_{\text{p},i}^{(\mathrm{iter})}$
        \EndIf
    \EndFor \label{line:particle_end}
    
    \State $\mathbf{g}^{(\mathrm{iter}+1)} \leftarrow \arg\min_i \mathcal{L}_{\text{p},i}^{(\mathrm{iter}+1)}$ \Comment{Update global best}
    \State $\operatorname{CRLB}^* \leftarrow \min_i \mathcal{L}_{\text{p},i}^{(\mathrm{iter}+1)}$
    
    \For{$i = 1$ to $N_{\text{par}}$ in parallel} \label{line:update_loop}
        \State Compute inertia: $\omega(\mathrm{iter}) \leftarrow \omega_0 - (\omega_0 - \omega_f) \cdot \mathrm{iter}/I_{\max}$
        \State Generate random numbers: $r_1, r_2 \sim \mathcal{U}(0,1)$
        \State Update velocity: 
        \State \hspace{0.5cm} $\mathbf{v}_i^{(\mathrm{iter}+1)} \leftarrow \omega(\mathrm{iter}) \mathbf{v}_i^{(\mathrm{iter})} + \phi_c r_1 (\mathbf{p}_i^{(\mathrm{iter}+1)} - \mathbf{c}_i^{(\mathrm{iter})})$
        \State \hspace{2.0cm} $+ \phi_s r_2 (\mathbf{g}^{(\mathrm{iter}+1)} - \mathbf{c}_i^{(\mathrm{iter})})$
        \State Update position: $\mathbf{c}_i^{(\mathrm{iter}+1)} \leftarrow \mathbf{c}_i^{(\mathrm{iter})} + \mathbf{v}_i^{(\mathrm{iter}+1)}$
        \State Apply boundary constraints: 
        \State \hspace{0.5cm} $\mathbf{c}_i^{(\mathrm{iter}+1)} \leftarrow \max(0, \min(0.999, \mathbf{c}_i^{(\mathrm{iter}+1)}))$
    \EndFor \label{line:update_end}
    
    \State $\mathrm{iter} \leftarrow \mathrm{iter} + 1$
\Until{$\mathrm{iter} \geq I_{\max}$ or $\|\mathbf{g}^{(\mathrm{iter})} - \mathbf{g}^{(\mathrm{iter}-1)}\| < \epsilon_{\text{con}}$} \label{line:main_until}

\State $\mathbf{c}^* \leftarrow \mathbf{g}^{(\mathrm{iter})}$
\end{algorithmic}
\end{algorithm}

Having established the explicit dependence of the CRLB on the chirp parameters $(c_1, c_2)$, the subsequent challenge is to solve the corresponding optimization problem. For each transmission block, the optimal parameters for sensing are those that minimize the estimation uncertainty bound. This leads to the formulation of two optimization problems for delay and Doppler estimation, respectively:
\begin{equation} \label{eq:crlb_delay_optimization}
\{c_1^*, c_2^*\}_l = \underset{c_1, c_2 \in [0,1)}{\arg\min} \ \text{CRLB}(l),
\end{equation}
\begin{equation} \label{eq:crlb_doppler_optimization}
\{c_1^*, c_2^*\}_\nu = \underset{c_1, c_2 \in [0,1)}{\arg\min} \ \text{CRLB}(\nu).
\end{equation}
In practice, one may also target a composite objective, such as a weighted sum, for joint delay-Doppler performance.

Solving these problems is highly non-trivial due to the intricate, non-convex landscape of the CRLB function. The complexity arises from the trigonometric and exponential terms embedded within the effective channel matrix $\bm{H}_{\text{eff}}^{\text{sens}}$ and its derivatives, which define the information measures $\Phi_l$, $\Phi_\nu$, and $\Xi$. Traditional gradient-based methods prove ineffective due to the prevalence of local minima and the absence of reliable gradient information in regions of near-singularity. 

To overcome these challenges, we employ PSO \cite{marini2015particle}, a robust population-based metaheuristic inspired by social swarm behavior. PSO is particularly adept at navigating complex, non-convex search spaces without requiring gradient information. It effectively balances broad global exploration with intensive local exploitation, making it a fitting choice for our parameter optimization task where discovering a near-global optimum is more critical than exact gradient descent.

The PSO algorithm operates with a swarm of $N_{\text{par}}$ particles, each representing a candidate solution point $\textbf{c}=[c_1,c_2]^\top$ in the bounded search space $[0,1)^2$. At each iteration $\mathrm{iter}$, particle $i$ is characterized by a position vector $\mathbf{c}_i^{(\mathrm{iter})} = [c_{1,i}^{(\mathrm{iter})}, c_{2,i}^{(\mathrm{iter})}]^T \in [0,1)^2$, a velocity vector $\mathbf{v}_i^{(\mathrm{iter})} \in \mathbb{R}^2$, a personal best position $\mathbf{p}_i^{(\mathrm{iter})}$ with associated objective value $\mathcal{L}_{\text{p},i}^{(\mathrm{iter})} = \min_{0 \leq \tau \leq \mathrm{iter}} \operatorname{CRLB}(\mathbf{c}_i^{(\tau)})$, and knowledge of the global best position $\mathbf{g}^{(\mathrm{iter})} = \arg\min_{1 \leq i \leq N_{\text{par}}} \mathcal{L}_{\text{p},i}^{(\mathrm{iter})}$. The optimization objective is the CRLB, denoted as $\operatorname{CRLB}(\mathbf{c})$, for the estimation parameter of interest.

The swarm evolves through an iterative update process that simulates social learning. Each particle adjusts its trajectory based on its own experience (attraction to its personal best) and the collective experience of the swarm (attraction to the global best). The velocity and position updates follow
\begin{align*}
\mathbf{v}_i^{(\mathrm{iter}+1)} &= \omega(\mathrm{iter}) \mathbf{v}_i^{(\mathrm{iter})} + \phi_c r_1 \left(\mathbf{p}_i^{(\mathrm{iter}+1)} - \mathbf{c}_i^{(\mathrm{iter})}\right) \\
&\quad + \phi_s r_2 \left(\mathbf{g}^{(\mathrm{iter}+1)} - \mathbf{c}_i^{(\mathrm{iter})}\right), \\
\mathbf{c}_i^{(\mathrm{iter}+1)} &= \mathbf{c}_i^{(\mathrm{iter})} + \mathbf{v}_i^{(\mathrm{iter}+1)},
\end{align*}
where $\omega(\mathrm{iter})$ is the iteration-dependent inertia weight controlling the balance between exploration and exploitation, $\phi_c$ and $\phi_s$ are the cognitive and social acceleration coefficients, and $r_1, r_2 \sim \mathcal{U}(0,1)$ are independent random variables generated at each iteration. The inertia weight is linearly decreased across iterations according to
\begin{equation*}
\omega(\mathrm{iter}) = \omega_0 - (\omega_0 - \omega_f) \cdot \mathrm{iter} / I_{\max},
\end{equation*}
where $\omega_0$ and $\omega_f$ denote the initial and final inertia values, respectively, thereby transitioning the algorithm from global exploration toward local refinement as iterations progress.

The complete workflow for CRLB minimization using PSO is formalized in Algorithm~\ref{alg:pso_crlb}. The algorithm initializes particles uniformly within the parameter space and sets initial velocities to zero. Personal bests are initialized to the initial positions with their corresponding CRLB values. At each iteration, particles evaluate the objective function at their current positions, update personal bests when improvements are observed, update the global best across the swarm, and then move according to the PSO update equations with adaptive inertia. Boundary conditions are enforced through clamping to keep particles within $[0, 0.999]^2$. The algorithm terminates when either the maximum iteration count $I_{\max}$ is reached or the change in global best position falls below the convergence tolerance $\epsilon_{\text{con}}$.

\begin{rem}
The computational complexity of algorithm~\ref{alg:pso_crlb} is analyzed as follows. Each iteration requires $N_{\text{par}}$ objective function evaluations in parallel. Each CRLB evaluation involves constructing the AFDM channel matrix and computing Jacobian matrices, both with complexity $\mathcal{O}(N^2)$, while FIM inversion has negligible cost due to its $2 \times 2$ dimensionality. Averaging over $N_h$ channel realizations gives $\mathcal{O}(N_h N^2)$ per evaluation. The overall complexity is $\mathcal{O}(I_{\max} N_{\text{par}} N_h N^2)$. In contrast, exhaustive grid search with resolution $G$ per dimension requires $\mathcal{O}(G^2 N_h N^2)$ evaluations, which becomes prohibitive for fine-grained exploration. The PSO approach offers significant computational advantage through its adaptive sampling strategy, particularly when $N_{\text{par}} I_{\max} \ll G^2$.
\end{rem}

In summary, the integration of PSO with the derived CRLB analysis forms the core engine for sensing-optimal waveform adaptation in Agile-AFDM. This approach enables the dynamic, block-by-block identification of chirp parameters that minimize the theoretical lower bound on estimation error. The efficacy of this optimization, resulting in significant reductions in estimation uncertainty for both delay and Doppler, will be quantitatively demonstrated in Section \ref{Sec:Simu}.

\section{Numerical and Simulation Results}\label{Sec:Simu}

\begin{table}[t]
\caption{PAPR Simulation Parameters}
\label{tab:afdm_sim_params}
\centering
\setlength{\tabcolsep}{3.5mm}
\begin{tabular}{ccc}
\toprule
\textbf{Parameter} & \textbf{Description} & \textbf{Value} \\
\midrule
$N$ & Number of subcarriers & $64$ \\
$K$ & Number of users & $8$ \\
$T_{\mathrm{sym}}$ & Symbol duration & $128~\mu\mathrm{s}$ \\
$N_{\mathrm{cpp}}$ & CPP length & $10$ \\
\midrule
$L$ & Oversampling factor & $10$ \\
$\Delta c_2$ & Step size & $\frac{1}{80}$ \\
$\Delta c_2'$ & Step size & $\frac{1}{3120}$ \\
$\mathrm{CR}$ & Clipping ratio & $2$ \\
\bottomrule
\end{tabular}
\end{table}

To validate the theoretical analysis and demonstrate the efficacy of the proposed Agile-AFDM framework, comprehensive numerical simulations are conducted in this section. We evaluate the performance across the three core objectives introduced earlier: power efficiency (PAPR), communication reliability (SIR), and sensing accuracy (CRLB). The results are systematically compared against established benchmarks, including conventional OFDM and static-parameter AFDM.

\subsection{PAPR Performance}

\begin{figure*}[t]
  \centering
  \includegraphics[width=1.98\columnwidth]{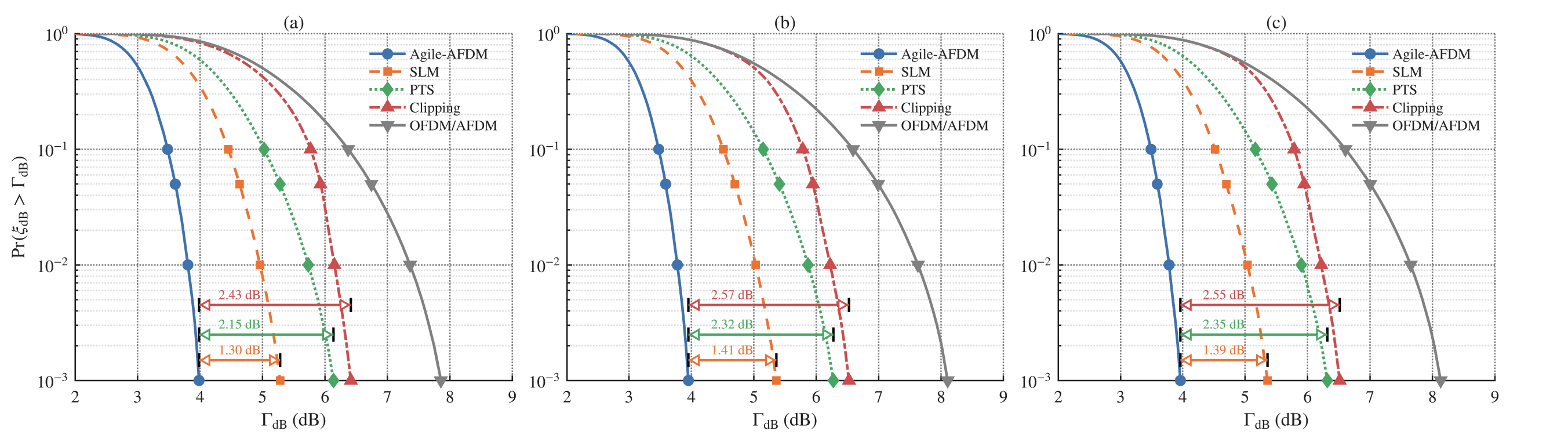}
  \caption{CCDF achieved by various PAPR reduction techniques under different modulation formats: (a) complex Gaussian signals, (b) 64QAM, (c) 128QAM. We highlight that AFDM with static chirp parameters exhibits the same PAPR performance with OFDM.}
  \label{fig:papr}
\end{figure*}

We first assess the PAPR reduction capability of Agile-AFDM. For comparison, we select three prevalent PAPR reduction techniques designed for OFDM systems: clipping, Selective Mapping (SLM), and Partial Transmit Sequence (PTS). These benchmarks were chosen for comparison because, like Agile-AFDM, they preserve spectral efficiency and require minimal signaling overhead.

The simulation parameters are summarized in Table~\ref{tab:afdm_sim_params}. Specifically, we consider an uplink multi-user scenario with $K=8$ users, where each user is allocated $8$ out of $N=64$ subcarriers. The number of subcarriers $N = 64$, the CPP length $N_{\mathrm{cpp}}=10$, and the symbol duration $T_{\mathrm{sym}} = 128~\mu\mathrm{s}$. To accurately capture PAPR behavior of analog waveforms, we consider an oversampling factor $L = 10$. The data symbols are transmitted as either complex Gaussian symbols or QAM modulated symbols to evaluate the versatility of Agile-AFDM across different symbol types.

The Complementary Cumulative Distribution Function (CCDF) is used as the metric for PAPR performance, reflecting the probability that the PAPR exceeds a given threshold \cite{sec2_papr}:
\begin{equation*}
\text{CCDF}(\Gamma) = \Pr(\xi > \Gamma).
\end{equation*}

Fig.~\ref{fig:papr} presents the CCDF performance for various PAPR reduction schemes, with data symbols configured as either complex Gaussian or QAM. Taking the complex Gaussian signal in Fig.~\ref{fig:papr}(a) as an example, we make the following key observations.
\begin{itemize}[leftmargin=0.5cm]
\item For the OFDM system without any PAPR mitigation scheme, the PAPR required to reach a CCDF of $10^{-3}$ is approximately $7.86~\mathrm{dB}$. In contrast, the proposed Agile-AFDM scheme with optimized frequency-domain chirp parameter $c_2$ significantly reduces the PAPR to $3.98~\mathrm{dB}$ under the same CCDF, effectively cutting it in half. This dramatic improvement stems from the dynamic modulation of the signal's time-frequency structure.
\item Agile-AFDM consistently outperforms state-of-the-art PAPR reduction techniques. Specifically, Agile-AFDM achieves PAPR reductions of approximately $1.3~\mathrm{dB}$ over SLM, $2.15~\mathrm{dB}$ over PTS, and $2.43~\mathrm{dB}$ over clipping.
\end{itemize}

\begin{rem}
Recall from the statistical invariance property of the unitary IDAFT that a fixed-parameter AFDM system exhibits PAPR statistics identical to OFDM. Therefore, the gains shown in Fig.~\ref{fig:papr} are solely attributable to the proposed data-aware, per-block optimization of $c_2$.
\end{rem}

\begin{rem}
For Agile-AFDM, the optimal chirp parameter $c_2$ is identified by searching the parameter space, with each candidate requiring a PAPR evaluation. Similarly, SLM and PTS achieve PAPR reduction by searching over phase sequences and evaluating PAPR for each candidate configuration. To ensure a fair comparison, we fix the total number of PAPR evaluations to $128$ for all methods.
\end{rem}

\subsection{SIR Performance}

\begin{figure*}[t]
  \centering
  \includegraphics[width=1.98\columnwidth]{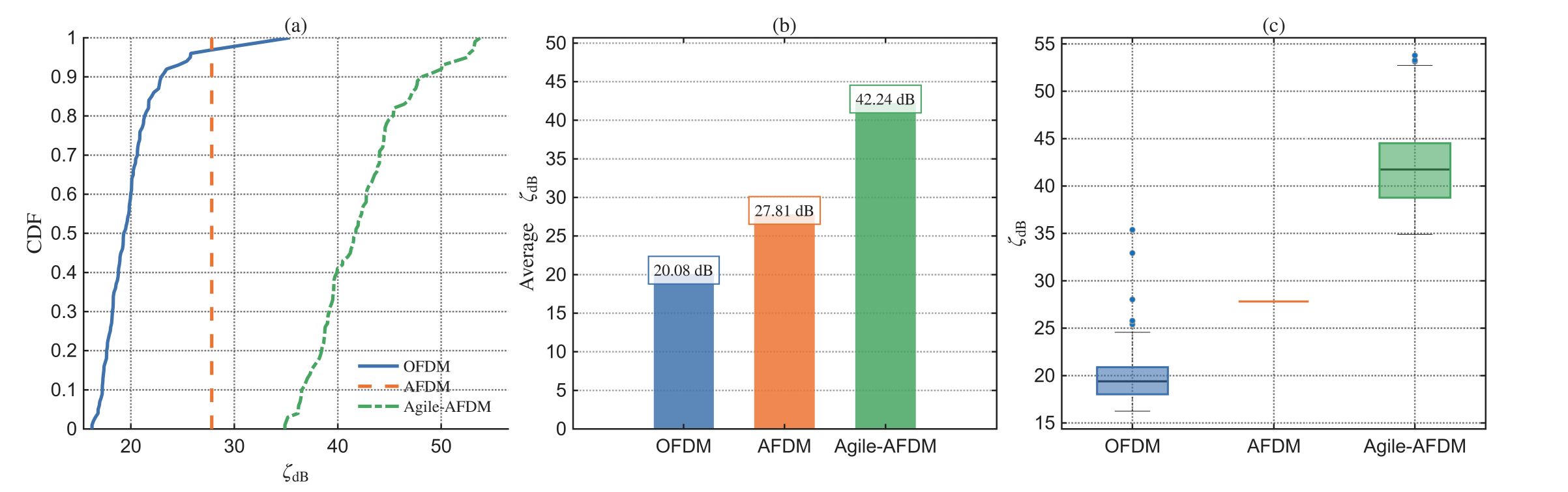}
  \caption{SIR Performance Comparison of OFDM, static (but parameter-optimized) AFDM and Agile-AFDM in Rayleigh Fading Channels: (a) CDF, (b) Average SIR Comparison, (c) Statistical Distribution Analysis.}
  \label{fig:sir}
\end{figure*}

\begin{table}[t]
\centering
\caption{SIR Simulation Parameters}
\label{tab:sim_params}
\setlength{\tabcolsep}{2mm}
\begin{tabular}{ccc}
\toprule
\textbf{Parameter} & \textbf{Description} & \textbf{Value} \\
\midrule
$N$ & Number of subcarriers & $64$ \\
$l$ & Path delays & $[1, 4, 5]$ \\
$\nu$ & Normalized Doppler & $[0.1, 0.4, 0.7]$ \\
-- & Tap powers & $[1.00, 0.20, 0.05]$ \\
$N_h$ & Channel realizations & $100$ \\
$N_b$ & Number of data blocks & $100$ \\
\midrule
$G$ & Grid search resolution & $100 \times 100$ \\
$B_1 \times B_2$ & Number of blocks & $16$ \\
$\epsilon_{\text{con}}$ & Convergence tolerance & $10^{-6}$ \\
$I_{\max}$ & Maximum iterations & $50$ \\
$\alpha$ & Adam learning rate & $0.001$ \\
$\beta_1$ & Adam $\beta_1$ & $0.9$ \\
$\beta_2$ & Adam $\beta_2$ & $0.999$ \\
$\epsilon_{\text{adam}}$ & Adam $\epsilon$ & $10^{-8}$ \\
\bottomrule
\end{tabular}
\end{table}

Next, we evaluate the communication reliability of Agile-AFDM by analyzing its SIR performance. The simulations validate the theoretical framework of Section \ref{Sec:SIR} under Rayleigh fading channels. The key simulation parameters are listed in Table~\ref{tab:sim_params}.
Fig.~\ref{fig:sir} presents a comprehensive performance comparison among  OFDM, static AFDM, and Agile-AFDM schemes $across$ 100 data blocks with varying channel realizations. 
\begin{itemize}[leftmargin=0.5cm]
    \item The Cumulative Distribution Functions (CDFs) presented in Fig.~\ref{fig:sir}(a) reveal the probabilistic performance profile of each scheme. OFDM exhibits the lowest SIR, concentrated between $16.24$-$35.27$ dB. Agile-AFDM shows a significant rightward shift, with SIR values predominantly in the $34.90$-$53.77$ dB range, indicating superior and more consistent performance. The vertical line for static AFDM represents its fixed-parameter nature, achieving a constant SIR of approximately $27.81$ dB.
    \item The average SIR comparison in Fig.~\ref{fig:sir}(b) quantifies the mean performance gains. Using OFDM ($20.08$ dB) as the baseline, static AFDM with optimized, but fixed parameters achieves a $7.73$ dB improvement ($27.81$ dB). In contrast, Agile-AFDM attains an average SIR of $42.24$ dB, delivering a substantial gain of $22.16$ dB over OFDM and $14.43$ dB over static AFDM.
    \item The boxplot in Fig.~\ref{fig:sir}(c) highlights the robustness of the proposed method. Agile-AFDM maintains a high SIR distribution with a median of $41.75$ dB. Notably, its entire interquartile range ($38.77$ dB -- $44.52$ dB) lies significantly above the fixed AFDM level ($27.81$ dB) and the OFDM distribution ($18.03$ dB -- $20.87$ dB). While OFDM shows several positive outliers, its core performance remains inferior. The results valid that Agile-AFDM effectively mitigates interference through adaptive tuning, consistently delivering high-quality signal reception.
\end{itemize}

These results collectively affirm that dynamic, data-aware parameter optimization in Agile-AFDM provides significant and robust gains in ICI suppression, translating directly to enhanced communication reliability in high-mobility scenarios.

\begin{table}[t]
\centering
\caption{CRLB Simulation Parameters}
\label{tab:crlb_sim_params}
\setlength{\tabcolsep}{2mm}
\begin{tabular}{ccc}
\toprule
\textbf{Parameter} & \textbf{Description} & \textbf{Value} \\
\midrule
$N$ & Number of subcarriers & $64$ \\
$\mathrm{SNR}$  & Effective signal-to-noise ratio & $20$ dB \\
$\sigma_x^2$ & Power of data symbols & $\frac{1}{N}$\\
\midrule
$G$ & Grid search resolution & $100 \times 100$ \\
$N_{\text{par}}$ & Number of particles & $200$ \\
$I_{\max}$ & Maximum iterations & $100$ \\
$\epsilon_{\text{con}}$ & Convergence tolerance & $10^{-6}$ \\
$\omega$ & Initial inertia weight & $0.99$ \\
$\phi_c$ & Cognitive acceleration coefficient & $1.2$ \\
$\phi_s$ & Social acceleration coefficient & $1.8$ \\
\bottomrule
\end{tabular}
\end{table}

\begin{figure*}[t]
  \centering
  \includegraphics[width=2\columnwidth]{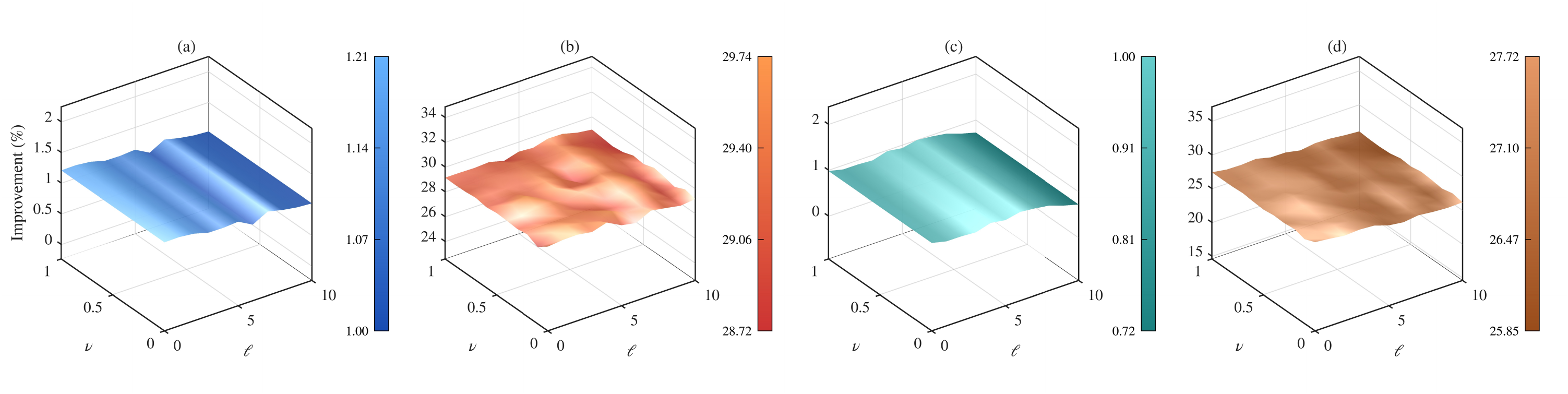}
  \caption{Percentage improvement of the Agile-AFDM in terms of CRLB reduction versus $\ell$ and $\nu$. 
    (a) Agile-AFDM vs. OFDM: CRLB on channel delay estimation. 
    (b) Agile-AFDM vs. OFDM: CRLB on Doppler shift estimation. 
    (c) Agile-AFDM vs. AFDM: CRLB on channel delay estimation. 
    (d) Agile-AFDM vs. AFDM: CRLB on Doppler shift estimation.}
  \label{fig:crlb}
\end{figure*}

\subsection{CRLB Performance}

\subsubsection{Overall performance comparison}
We first conduct a comprehensive evaluation of parameter estimation performance through systematic CRLB analysis.
For the baseline static AFDM system, where a closed-form solution for the CRLB is unavailable, we determine its optimal fixed parameters via an exhaustive grid search over a $100 \times 100$ discretization of the $(c_1, c_2)$ space. In contrast, the proposed Agile-AFDM framework employs the PSO algorithm (Algorithm~\ref{alg:pso_crlb}) to dynamically optimize these chirp parameters for each sensing block. The key simulation parameters, including the PSO configuration, are detailed in Table~\ref{tab:crlb_sim_params}.

The performance gains of Agile-AFDM are illustrated in Fig.~\ref{fig:crlb}, which plot the CRLB improvement compared to OFDM and AFDM across a range of delay ($l$) and normalized Doppler ($\nu$) values.
\begin{itemize}[leftmargin=0.5cm]
    \item \textit{Delay Estimation}: 
    As observed in Fig.~\ref{fig:crlb}(a) and (c), the proposed Agile-AFDM provides a marginal improvement in delay estimation, with average gains of approximately $1.11\%$ over OFDM and $0.91\%$ over static AFDM. This relatively small gap is expected, as the benchmark waveforms already possess strong delay resolution capabilities. 
    It is worth noting that although the performance gain exhibits a slight downward trend as the channel delay index $\ell$ increases, this variation is negligible. The proposed scheme demonstrates remarkable stability, maintaining consistent performance superiority even at relatively large delay spreads.
    \item \textit{Doppler Estimation}: 
    In contrast to the delay domain, Fig.~\ref{fig:crlb}(b) and (d) demonstrate that Agile-AFDM yields substantial performance gains in Doppler estimation. Specifically, the scheme achieves a CRLB reduction of roughly $29.24\%$ compared to OFDM and $26.74\%$ compared to static AFDM. 
    The improvement remains robust and uniform across the entire range of the normalized Doppler $\nu$. This indicates that the proposed agile parameter tuning effectively preserves the optimal signal energy concentration in the transform domain despite high mobility. By dynamically matching the waveform parameters to the channel statistics, the proposed scheme maintains a significant performance lead over static baselines regardless of the Doppler shift magnitude.
\end{itemize}

The contrast between the marginal gains in delay estimation and the substantial improvements in Doppler estimation raises a fundamental question regarding the underlying optimization landscape: why does parameter agility yield massive returns for Doppler estimation but only limited benefits for delay?

\subsubsection{Parameter sensitivity analysis}
To uncover the physical origin of this performance asymmetry, we investigate the sensitivity of the CRLB objective function with respect to the chirp parameters $(c_1, c_2)$. 
Our hypothesis is that the magnitude of the performance gain is intrinsically linked to the ``flatness'' of the optimization landscape. Specifically, if the CRLB is insensitive to parameter variations (a flat landscape), the static AFDM baseline is likely already operating near the optimum, leaving little room for PSO-based improvement. Conversely, a highly sensitive landscape implies that static parameters are prone to significant misalignment, which Agile-AFDM can correct.

\begin{table}[t]
\centering
\caption{Statistical Summary of Sensitivity Analysis at $(l=4, \nu=0.3)$}
\label{tab:crlb_sensitivity_stats}
\begin{tabular}{lcc}
\hline
\textbf{Metric} & CRLB($l$) & CRLB($\nu$) \\
\hline
\multirow{2}{*}{RV (\%)} & Mean: $10.79
$ & Mean: $2359343.75
$ \\
& Std. Dev.: $3.98
$ & Std. Dev.: $259702.88
$ \\
\hline
\multirow{2}{*}{CV (\%)} & Mean: $1.89
$ & Mean: $89.97
$ \\
& Std. Dev.: $0.50
$ & Std. Dev.: $0.02
$ \\
\hline
\end{tabular}
\end{table}

To quantify this, we perform a focused analysis at the channel condition $(l=4, \nu=0.3)$, where Agile-AFDM provides a $0.98\%$ improvement for delay estimation versus a $26.86\%$ improvement for Doppler estimation. We compute the CRLB over a fine grid of parameter pairs $(c_1, c_2)$ across $100$ Monte Carlo trials and employ two complementary metrics: the relative variation (RV) and the coefficient of variation (CV).
RV is defined as
\begin{equation*}
\text{RV} \triangleq \frac{\max_{c_1,c_2} \text{CRLB} - \min_{c_1,c_2} \text{CRLB}}{\min_{c_1,c_2} \text{CRLB}} \times 100\%,
\label{eq:rv}
\end{equation*}
which quantifies worst-case parameter sensitivity, while the CV is defined as
\begin{equation*}
\text{CV} \triangleq \frac{\mathrm{std}(\text{CRLB})}{\mathrm{mean}(\text{CRLB})} \times 100\%,
\label{eq:cv}
\end{equation*}
where $\mathrm{mean}(\cdot)$ and $\mathrm{std}(\cdot)$ denote the average and standard deviation operators across the parameter space, respectively.

The statistical results are summarized in Table~\ref{tab:crlb_sensitivity_stats}. The CRLB($l$) exhibits low sensitivity with mean RV of $10.79\%$ (std. dev. $3.98\%$) and mean CV of $1.89\%$ (std. dev. $0.50\%$). In contrast, Doppler estimation demonstrates substantially higher sensitivity, with mean RV of $2359343.75\%$ (std. dev. $259702.88\%$) and mean CV of $89.97\%$ (std. dev. $0.02\%$). This striking difference reveals that the CRLB($\nu$) landscape exhibits pronounced variations around the optimal parameters, while the CRLB($l$) remains relatively stable across the $(c_1, c_2)$ space. Consequently, PSO-based optimization is essential for Doppler estimation to navigate the highly sensitive landscape, while delay estimation performance remains inherently robust to parameter variations.

\section{Conclusions}\label{Sec:Conclusion}
This paper has introduced Agile-AFDM, a dynamic and data-aware waveform adaptation framework that shifts the design paradigm of AFDM from static, worst-case configuration to real-time, context-aware optimization. By treating chirp parameters as dynamic degrees of freedom, we enable AFDM to self-adapt for diverse objectives such as power efficiency, communication reliability, and sensing accuracy, marking a concrete step forward in waveform intelligence for next-generation wireless systems.

A natural question arises: does such agility come at an unsustainable cost in complexity? We argue that the penalty is not only manageable but also worthwhile. While real-time optimization introduces additional computational load, the linear and fully digital nature of AFDM transformations ensures implementability in modern baseband processors. Moreover, the periodicity of chirp parameters significantly confines the search space, and practical accelerations (through parallel processing, hardware-friendly approximations, or even lightweight neural predictors) can make dynamic adaptation feasible even in latency-sensitive scenarios. In essence, agility does not necessitate impracticality.

Looking ahead, several exciting directions emerge from this work. 
\begin{itemize}[leftmargin=0.5cm]
    \item First, while we have focused on optimizing PAPR, ICI, and CRLB separately, a natural extension lies in multi-objective optimization, where system designers can tradeoff between efficiency, reliability, and sensing precision according to real-time demands. Future work could explore Pareto-optimal parameter fronts, adaptive weighted cost functions, or reinforcement-learning agents that dynamically balance competing objectives under varying channel and traffic conditions.
    \item Second, the core idea of agility extends beyond AFDM. Similar parameter-aware adaptation can be applied to fractional Fourier division multiplexing, generalized frequency division multiplexing, or any modulation scheme that embeds configurable degrees of freedom for shaping time-frequency-spatial structures. This points toward a broader vision of context-aware physical layers that self-configure based on the operational environment.
\end{itemize}

\appendices
\section{Proof of Proposition \ref{Prop:perio}}\label{sec:AppA}
Since the communication channel $\bm{H}_{\text{eff}}^{\text{comm}}$ is a linear combination of multipath components and the sensing channel $\bm{H}_{\text{eff}}^{\text{sens}}$ represents a single return path, it suffices to prove the periodicity for the contribution of a generic path with complex gain $\beta$. The $(p,q)$-th element of such a generic channel matrix is proportional to:
\begin{equation*}
\mathcal{E}(c_1, c_2) = e^{j \frac{2\pi}{N} \left(N c_1 \ell^2 - q \ell + N c_2 (q^2 - p^2)\right)} \mathcal{F}(p,q; c_1),
\end{equation*}
where $\mathcal{F}(p,q; c_1)$ denotes the dependence of the function $\mathcal{F}$ on $c_1$ via $\psi$. We analyze the impact of integer shifts $k, m \in \mathbb{Z}$ on the parameters $c_1$ and $c_2$.

First, consider the phase term. Substituting $c_1 \leftarrow c_1 + k$ and $c_2 \leftarrow c_2 + m$, the exponential term becomes:
\begin{equation*}
\begin{aligned}
&e^{j \frac{2\pi}{N} \left(N (c_1+k) \ell^2 - q \ell + N (c_2+m) (q^2 - p^2)\right)} \\
&= e^{j \frac{2\pi}{N} (\dots)} \cdot \underbrace{e^{j 2\pi k \ell^2}}_{=1} \cdot \underbrace{e^{j 2\pi m (q^2 - p^2)}}_{=1},
\end{aligned}
\end{equation*}
where the identity holds because $\ell, p, q$ are integers, making the exponents integer multiples of $2\pi j$.

Second, consider the function $\mathcal{F}(p,q)$, which depends on $\psi = p - q + \nu + 2 N c_1 \ell$. Under the shift $c_1 \leftarrow c_1 + k$, the variable $\psi$ transforms as:
\begin{equation*}
\psi' = p - q + \nu + 2 N (c_1+k) \ell = \psi + 2 N k \ell.
\end{equation*}
The summation term in \eqref{eq:commF} is $e^{-j \frac{2\pi}{N} \psi n}$. Substituting $\psi'$, we obtain:
\begin{equation*}
e^{-j \frac{2\pi}{N} \psi' n} = e^{-j \frac{2\pi}{N} \psi n} \cdot \underbrace{e^{-j \frac{2\pi}{N} (2 N k \ell) n}}_{e^{-j 4\pi k \ell n} = 1}.
\end{equation*}
Since the phase rotation per sample remains invariant, $\mathcal{F}(p,q)$ is unchanged.

Consequently, every element of the effective channel matrices remains invariant under integer shifts of $c_1$ and $c_2$. This concludes the proof.

\section{Proof of Theorem \ref{thm:AFDM_trigonometric}}\label{sec:AppB}
Using the product-to-sum formula, we can express the product of $\omega_1\left(kt\right)$, $\omega_2\left(lt\right)$, $\omega_3\left(mt\right)$, and $\omega_4\left(nt\right)$ as a sum (or difference) of eight trigonometric functions. 
As an example,
\begin{equation*}
\begin{aligned}
&\cos(2t)\cdot\cos(4t)\cdot\cos(1t)\cdot\cos(5t)\\
&=\left(\frac{1}{2}[\cos(6t)+\cos(-2t)]\right)\cdot\left(\frac{1}{2}[\cos(6t)+\cos(-4t)]\right)\\
&=\frac{1}{4}[\cos(6t)\cos(6t)+\cos(6t)\cos(-4t)\\
&\hspace{0.8cm} +\cos(-2t)\cos(6t)+\cos(-2t)\cos(-4t)]\\
&=\frac{1}{8}[\cos(12t)+1+\cos(2t)+\cos(10t)\\
&\hspace{0.8cm}+\cos(4t)+\cos(-8t)+\cos(-6t)+\cos(2t)].
\end{aligned}
\end{equation*}

In general, we represent this process as:
\begin{equation*}
\begin{aligned}
& \omega_1(kt)\cdot\omega_2(lt)\cdot\omega_3(mt)\cdot\omega_4(nt)\\
& =\frac14[u_1((k+l)t)+u_{2}((k-l)t)]\\
&  \hspace{0.5cm} \times [u_{3}((m+n)t)+u_{4}((m-n)t)]\\
& =\frac18[u_5((k+l+m+n)t)+u_6((k+l-m-n)t)\\
&  \hspace{0.5cm} +u_7((k+l+m-n)t)+u_8((k+l-m+n)t)\\
&  \hspace{0.5cm} +u_9((k-l+m+n)t)+u_{10}((k-l-m-n)t)\\
&  \hspace{0.5cm} +u_{11}((k-l+m-n)t)+u_{12}((k-l-m+n)t)],
\end{aligned}
\end{equation*}
where $u_{1\sim 12}(t)$ represents either $\pm \cos(t)$ or $\pm \sin(t)$. 

In particular,
\begin{equation*}
   \int_0^{2\pi} u_{i}(wt) dt = 
   \begin{cases}
       0, & \text{If $u_{i}(wt)=\pm \sin(wt)$}, \\
       \pm 2\pi \delta(w), & \text{If $u_{i}(wt)=\pm \cos(wt)$}.
   \end{cases}
\end{equation*}

This outcome guides the computation using the matrix $\bm{Q}$. Specifically,
\begin{itemize}
    \item If there are either one or three $\cos(t) $ terms among $\omega_{1\sim4}(t)$, then  $u_{5\sim12}(t)$  will be   $\pm \sin(t) $, resulting in a zero integral. This corresponds to the zero rows in $\bm{Q}$.
    \item Otherwise, $ u_{5\sim12}(t)$ are $\pm \cos(t)$, and the integral's outcome depends on the frequency combinations, represented by the $\pm{1}$ entries in $\bm{Q}$.
    \item The first column of $\bm{Q}$ is entirely zero because $k + l + m + n$ is non-zero for $1 \leq k, l, m, n \leq N-1$.
\end{itemize}

The row index $i$ is determined by the indicator functions  $\delta(\omega_j(t) = \sin(t))$, which check if each  $\omega_j(t)$ is  $\sin(t)$ or $\cos(t)$.

\bibliographystyle{IEEEtran}
\bibliography{References}



\end{document}